\newcommand{\bse}{\begin{subequations}}
\newcommand{\ese}{\end{subequations}}
\newtheorem{theorem}{Theorem}[section]
\newtheorem{definition}[theorem]{Definition}
\newtheorem{remark}[theorem]{Remark}
\newtheorem{proposition}[theorem]{Proposition}
\newtheorem{example}[theorem]{Example}
\numberwithin{equation}{section}
\DeclareMathOperator{\tr}{tr}
\def\undertilde#1{\mathord{\vtop{\ialign{##\crcr
$\hfil\displaystyle{#1}\hfil$\crcr\noalign{\kern1.5pt\nointerlineskip}
$\hfil\tilde{}\hfil$\crcr\noalign{\kern-6.5pt}}}}}
\def\underhat#1{\mathord{\vtop{\ialign{##\crcr
$\hfil\displaystyle{#1}\hfil$\crcr\noalign{\kern1.5pt\nointerlineskip}
$\hfil\hat{}\hfil$\crcr\noalign{\kern-6.5pt}}}}}
\def\underbar#1{\mathord{\vtop{\ialign{##\crcr
$\hfil\displaystyle{#1}\hfil$\crcr\noalign{\kern1.5pt\nointerlineskip}
$\hfil\bar{}\hfil$\crcr\noalign{\kern-6.5pt}}}}}
\def\undercheck#1{\mathord{\vtop{\ialign{##\crcr
$\hfil\displaystyle{#1}\hfil$\crcr\noalign{\kern1.5pt\nointerlineskip}
$\hfil\check{}\hfil$\crcr\noalign{\kern-6.5pt}}}}}
\def\wt#1{\tilde{#1}}
\def\wh#1{\hat{#1}}
\def\wb#1{\bar{#1}}
\def\th#1{\wh{\wt{#1}}}
\def\hb#1{\wb{\wh{#1}}}
\def\bt#1{\wt{\wb{#1}}}
\def\ut#1{\undertilde{#1}}
\def\uh#1{\underhat{#1}}
\def\ub#1{\underbar{#1}}
\newcommand{\rd}{\mathrm{d}}
\newcommand{\re}{\mathrm{e}}
\newcommand{\ri}{\mathrm{i}}
\newcommand{\rD}{\mathrm{D}}
\newcommand{\cN}{\mathcal{N}}
\newcommand{\bc}{\boldsymbol{c}}
\newcommand{\tbc}{{}^{t\!}\boldsymbol{c}}
\newcommand{\bu}{\boldsymbol{u}}
\newcommand{\tbv}{{}^{t\!}\boldsymbol{v}}
\newcommand{\bC}{\boldsymbol{C}}
\newcommand{\tbC}{{}^{t\!}\boldsymbol{C}}
\newcommand{\bO}{\boldsymbol{O}}
\newcommand{\bP}{\boldsymbol{P}}
\newcommand{\bQ}{\boldsymbol{Q}}
\newcommand{\bU}{\boldsymbol{U}}
\newcommand{\bLd}{\mathbf{\Lambda}}
\newcommand{\tbLd}{{}^{t\!}\boldsymbol{\Lambda}}
\newcommand{\bOa}{\boldsymbol{\Omega}}
\newcommand{\tbOa}{{}^{t\!}\boldsymbol{\Omega}}
\newcommand{\oa}{\omega}
\newcommand{\Oa}{\Omega}
\acrodef{1D}[1D]{one-dimensional}
\acrodef{2D}[2D]{two-dimensional}
\acrodef{2DTL}[2DTL]{two-dimensional Toda lattice}
\acrodef{3D}[3D]{three-dimensional}
\acrodef{ABS}[ABS]{Adler--Bobenko--Suris}
\acrodef{bDT}[bDT]{binary Darboux transform}
\acrodef{BT}[BT]{B\"acklund transform}
\acrodef{BSQ}[BSQ]{Boussinesq}
\acrodef{CAC}[CAC]{consistency-around-the-cube}
\acrodef{DT}[DT]{Darboux transform}
\acrodef{DL}[DL]{direct linearisation}
\acrodef{DLT}[DLT]{direct linearising transform}
\acrodef{DS}[DS]{Drinfel'd--Sokolov}
\acrodef{DDeltaE}[D$\Delta$E]{differential-difference equation}
\acrodef{FG}[FG]{Fordy--Gibbons}
\acrodef{FX}[FX]{Fordy--Xenitidis}
\acrodef{GD}[GD]{Gel'fand--Dikii}
\acrodef{KP}[KP]{Kadomtsev--Petviashvili}
\acrodef{KK}[KK]{Kaup--Kupershmidt}
\acrodef{KdV}[KdV]{Korteweg--de Vries}
\acrodef{HM}[HM]{Hirota--Miwa}
\acrodef{HS}[HS]{Hirota--Satsuma}
\acrodef{MDC}[MDC]{multi-dimensional consistency}
\acrodef{NQC}[NQC]{Nijhoff--Quispel--Capel}
\acrodef{ODE}[ODE]{ordinary differential equation}
\acrodef{ODeltaE}[O$\Delta$E]{ordinary difference equation}
\acrodef{PDE}[PDE]{partial differential equation}
\acrodef{PDeltaE}[P$\Delta$E]{partial difference equation}
\acrodef{RHP}[RHP]{Riemann--Hilebert problem}
\acrodef{SK}[SK]{Sawada--Kotera}
\acrodef{sG}[sG]{sine--Gordon}
\acrodef{YB}[YB]{Yang--Baxter}
\title[Integrable semi-discretisation of the Drinfel'd--Sokolov hierarchies]{Integrable semi-discretisation of the Drinfel'd--Sokolov hierarchies}
\author{Yue Yin}
\address[YY]{School of Mathematical Sciences \\
East China Normal University \\ 500 Dongchuan Road\\ Shanghai 200241 \\ People's Republic of China}
\author{Wei Fu}
\address[WF]{School of Mathematical Sciences and Shanghai Key Laboratory of Pure Mathematics and Mathematical Practice \\
East China Normal University \\ 500 Dongchuan Road \\ Shanghai 200241 \\ People's Republic of China}
\begin{document}

\begin{abstract}
We propose a novel semi-discrete Kadomtsev--Petviashvili equation with two discrete and one continuous independent variables,
which is integrable in the sense of having the standard and adjoint Lax pairs, from the direct linearisation framework.
By performing reductions on the semi-discrete Kadomtsev--Petviashvili equation, new semi-discrete versions of the Drinfel'd--Sokolov hierarchies associated with Kac--Moody Lie algebras $A_r^{(1)}$, $A_{2r}^{(2)}$, $C_r^{(1)}$ and $D_{r+1}^{(2)}$ are successfully constructed.
A Lax pair involving the fraction of $\mathbb{Z}_\mathcal{N}$ graded matrices is also found for each of the semi-discrete Drinfel'd--Sokolov equations.
Furthermore, the direct linearisation construction guarantees the existence of exact solutions of all the semi-discrete equations discussed in the paper,
providing another insight into their integrability in addition to the analysis of Lax pairs.
\end{abstract}

\keywords{semi-discrete, KP, Drinfel'd--Sokolov, direct linearisation, Lax pair, tau function}

\maketitle

\section{Introduction}\label{S:Intro}

The theory of discrete integrable systems has been well studied within the past decades, leading to a large number of magnificent achievements in this field, cf. e.g. \cite{HJN16}.
The research on integrable discrete equations not only brings new insights into the modern theory of integrable systems,
but also pushes forward the development of many subjects in pure mathematics such as algebraic geometry, Lie algebras, orthogonal polynomials, special functions and random matrices.

There exist many techniques to search for integrable discretisation of differential equations,
among which a very important one is to construct transformations between solutions.
The main idea of such an approach goes back to the theory of orthogonal polynomials,
where the label of the polynomial within the family can be regarded as the discrete variable.
Then the three-point recurrence relation serves as an \ac{ODeltaE} that family of polynomials satisfies,
while the continuous variable for the related second-order \ac{ODE} plays a role of the parameter of the \ac{ODeltaE}.
Such a technique can also be extended to discretise a \ac{PDE}.
To be more precise, the \ac{BT} and the superposition formula are treated as the associated \ac{DDeltaE} and \ac{PDeltaE}, respectively, see e.g. \cite{WE73,LB80}.

\bse\label{KP}
A typical example to illustrate such an idea is possibly the \ac{KP} equation.
The (potential) \ac{KP} equation reads
\begin{align}\label{CCCKP}
\partial_3u=\frac{1}{4}\partial_1^3u+\frac{3}{2}(\partial_1u)^2+\frac{3}{4}\partial_1^{-1}\partial_2^2u,
\end{align}
in which $u=u(x_1,x_2,x_3)$, $\partial_j$ denotes the partial derivative with respect to the continuous argument $x_j$,
and $\partial_j^{-1}$ denotes the inverse of $\partial_j$, i.e. the integration with respect to $x_j$.
The \ac{BT} of the \ac{KP} equation is given by (see \cite{Che75})
\begin{align}\label{DCCKP}
\partial_2(\wt u-u)=\partial_1^2(\wt u+u)+\partial_1(\wt u-u)^2-2p_1\partial_1(\wt u-u),
\end{align}
which maps a given solution $u$  of \eqref{CCCKP} to a `new' one $\wt u$ associated with a B\"acklund parameter $p_1$.
Since equation \eqref{DCCKP} is of Burgers-type for $\wt u$, the new solution may be obtained using a Cole--Hopf transformation.
To construct more complex solutions of \eqref{CCCKP}, we can introduce another \ac{BT} taking the same form of \eqref{DCCKP} with regard to a solution $\wh u$ and the corresponding parameter $p_2$.
The two \ac{BT}s together give rise to
\begin{align}\label{DDCKP}
\partial_1(\wh u-\wt u)=(p_1-p_2+\wh u-\wt u)(u-\wt u-\wh u+\th u).
\end{align}
This equation was introduced in \cite{Che75}, and also appeared in \cite{NCW85} implicitly.
We also note that it plays a role of a master equation that generates the discrete Calogero--Moser model \cite{NP94}.
Equation \eqref{DDCKP} is the superposition formula (also known as the Bianchi permutability) for the \ac{KP} equation,
namely we can construct a `new' solution $\th u$ which takes an algebro-differential expression of the given solutions $u$, $\wt u$ and $\wh u$.
We can further introduce a third solution $\wb u$ generated from $u$ by the \ac{BT} with parameter $p_3$.
Then from the superposition formula \eqref{DDCKP} and its $(p_2,p_3)$- and $(p_3,p_1)$-analogues, we are able to derive a purely algebraic equation (see \cite{NCW85})
\begin{align}\label{DDDKP}
(p_1-\wt u)(p_2-p_3+\bt u-\th u)+(p_2-\wh u)(p_3-p_1+\th u-\hb u)+(p_3-\wb u)(p_1-p_2-\hb u-\bt u)=0.
\end{align}
Equation \eqref{DDDKP} forms a closed-form relation between six solutions of the \ac{KP} equation \eqref{CCCKP}.
If we now consider the potential $u=u(x_1,x_2,x_3;n_1,n_2,n_3)$ and adopt notations of forward and backward shifts as follows:
\begin{align*}
&\wt u\doteq u(n_1+1,n_2,n_3), \quad \wh u\doteq u(n_1,n_2+1,n_3), \quad \wb u\doteq u(n_1,n_2,n_3+1), \\
&\ut u\doteq u(n_1-1,n_2,n_3), \quad \uh u\doteq u(n_1,n_2-1,n_3), \quad \ub u\doteq u(n_1,n_2,n_3-1),
\end{align*}
equations \eqref{DCCKP} and \eqref{DDCKP} can be considered as \ac{DDeltaE}s, and \eqref{DDDKP} turns out to be a \ac{PDeltaE}.
These equations are often referred to as the semi-discrete and fully discrete \ac{KP} equations, respectively,
and are integrable in their own rights from many aspects \cite{NCW85,WC88a,WC88b}, see also \cite{DJM2} for the bilinear interpretation.
\ese

The essence of such a discretisation is actually the introduction of discrete linear dispersions.
For example, the evolutions of the discrete and continuous \ac{KP} equations listed in \eqref{KP} are fully determined by the product of the plane wave factors
\begin{align}\label{KP:PWF}
\rho(k)=\exp\left\{\sum_{j=1}^\infty k^jx_j\right\}\prod_{i=1}^\infty(p_i+k)^{n_i} \quad \hbox{and} \quad
\sigma(k')=\exp\left\{-\sum_{j=1}^\infty (-k')^jx_j\right\}\prod_{i=1}^\infty(p_i-k')^{-n_i},
\end{align}
where $k$ and $k'$ are the two separate spectral parameters, cf. \cite{NCW85,DJM2}.
Here the factors associated with $n_j$ should be understood as discretisations of the exponents of $x_j$.
This is because we can recover the continuous linear dispersion from the discrete one through the Miwa transform (cf. \cite{Miw82})
\begin{align*}
\rho(k)\sigma(k')=\prod_{i=1}^\infty\left(\frac{p_i+k}{p_i-k'}\right)^{n_i}
=\exp\left\{\sum_{j=1}^\infty\left[k^j-(-k')^j\right]\frac{(-1)^{j-1}}{j}\sum_{i=1}^\infty\frac{n_i}{p_i^j}\right\}
\doteq\exp\left\{\sum_{j=1}^\infty\left[k^j-(-k')^j\right]x_j\right\}.
\end{align*}
In other words, if we progressively replace the continuous independent variables $x_j$ by the discrete ones $n_i$ following \eqref{KP:PWF},
the continuous \ac{KP} is discretised, and as a consequence equations \eqref{DCCKP}, \eqref{DDCKP} and \eqref{DDDKP} arise successively.

Discrete equations arising from \ac{BT} and nonlinear superposition have considerable significance in view of their underlying rich algebraic and geometric structures,
leading to new notions such as multi-dimensional consistency \cite{DS97,NW01,BS02}, Lagrangian multiforms \cite{LN09} and discrete Painlev\'e equations \cite{NP91,NRGO01}.
Therefore, it turns to be an interesting problem to search for semi- and fully discrete equations that play roles of \ac{BT} and Bianchi permutability of integrable \ac{PDE}s as many as possible.
By following such an idea, a great progress has been made and a huge class of integrable discrete equations were discovered.
Typical examples include the discrete equations in the \ac{KdV} and \ac{BSQ} families (see e.g. \cite{NQC83,DJM3,NPCQ92,ZZN12}),
or more generally the discrete \ac{GD} hierarchy (see e.g. \cite{NPCQ92,Dol13,FX17a,Fu20}) that contains higher-rank lattice equations,
in addition to the discrete \ac{KP} equation \cite{NCW85,DJM2,Hir81}.
All these equations are associated with the $A$-type Lie algebras, i.e. $A_\infty$ for the \ac{KP} equation and $A_r^{(1)}$ for the discrete \ac{GD} hierarchy.

Then a natural question would be whether there exist integrable discrete equations associated with Lie algebras of other types;
or more deeply, whether there exists a classification of discrete integrable systems based on Lie algebras.
This is motivated by the remarkable observation that integrable \ac{PDE}s can be classified in terms of Lie algebras, see e.g. \cite{DS85,JM83,Wil81,FG83,MOP83,UT84},
which is now often referred to as the Drinfel'd--Sokolov classification.
The study on such a topic induced the discovery of the discrete BKP \cite{Miw82}, CKP \cite{Kas96,Sch03} and DKP \cite{Shi00} equations,
which are named after the infinite-dimensional algebras $B_\infty$ , $C_\infty$ and $D_\infty$, respectively.
However, the picture for the classification of \ac{2D} discrete integrable systems is far from complete,
mainly due to the complexity of \ac{BT}s and permutability for \ac{PDE}s related to Kac--Moody algebras which are not of $A_r^{(1)}$-type.
One successful example is possibly the class of equations of $A_2^{(2)}$-type (also known as the $BC_1$-type in the literature).
In this class the \ac{BT} and the permutability are fully understood \cite{SK77,MV00,HL91,Sch96} for the \ac{SK}, \ac{KK} and Tzitzeica equations,
but were not interpreted as discrete equations (which was illustrated by the \ac{SK} equation) until very recently, see \cite{MLX18}.
It is also worth mentioning that there exist different integrable discretisations of the Drinfel'd--Sokolov hierarchies, see e.g. \cite{TH96,AP11,GHY12,HK19,HK20,HZW00,FX17c}.

We would like to study integrable discretisation of the Drinfel'd--Sokolov hierarchies from a unified perspective through introducing discrete plane wave factors.
The method we adopt is the so-called \ac{DL}.
The \ac{DL} method was invented to solve initial value problems of integrable \ac{PDE}s by Fokas, Ablowitz and Santini \cite{FA81,FA83,SAF84}.
The main idea of this approach is to transform a nonlinear integrable \ac{PDE} into a very general linear integral equation.
Hence, the problem turns out to be solving the integral equation, and subsequently, a very large class of exact solutions for the nonlinear \ac{PDE} are constructed.
The \ac{DL} is not only an effective method of solving initial value problems of integrable \ac{PDE}s,
but more importantly a systematic tool to study the underlying structures of discrete and continuous integrable systems,
including searching for integrable discretisations of nonlinear \ac{PDE}s (see e.g. the review papers \cite{NC95,NCW85,NPCQ92})
and constructing integrability characteristics of discrete equations such as Lax pairs \cite{NPCQ92} and master symmetries \cite{NRGO01,Fu21a}, etc.
The key point of realising these is the introduction of the infinite matrix language in \cite{NQLC83}.
This transfers the \ac{DL} to be a purely algebraic method, playing a crucial role in constructing integrable equations, especially discrete ones.
In contrast to the bilinear approach, the \ac{DL} provides a path towards algebraic construction of nonlinear equations directly,
which brings the advantage that we are not necessarily restricted to the bilinear identity.
This allows us to find more complex discrete equations, see examples such as the discrete \ac{BSQ} and CKP equations \cite{ZZN12,Fu17a}.

In the recent papers \cite{Fu18b,Fu21b}, the connection between the linear integral equations in the \ac{DL} framework
and the continuous Drinfel'd--Sokolov and \ac{2D} Toda hierarchies associated with the infinite-dimensional algebras $A_\infty$, $B_\infty$ and $C_\infty$
as well as the Kac--Moody Lie algebras $A_r^{(1)}$, $A_{2r}^{(2)}$, $C_r^{(1)}$ and $D_{r+1}^{(2)}$ was established.
This gives us a strong hint about how to construct the discrete Drinfel'd--Sokolov hierarchies of these types from the \ac{DL} framework.
Notice that the discrete $A_r^{(1)}$-type equations have been worked out from the \ac{DL} by selecting the plane wave factors \eqref{KP:PWF}, see \cite{NPCQ92,Fu18a,Fu20}.
To search for the rest classes of discrete integrable systems, we need to introduce discrete odd-flow variables,
because integrable \ac{PDE}s of $A_{2r}^{(2)}$-, $C_r^{(1)}$- and $D_{r+1}^{(2)}$-types are fully described by continuous odd-flow variables $x_{2j+1}$, see e.g. \cite{JM83}.
For this reason, we consider the plane wave factors
\begin{align}\label{A:PWF}
\rho_n(k)=\exp\left\{\sum_{j=0}^\infty k^{2j+1}x_{2j+1}\right\}\prod_{i=1}^\infty\left(\frac{p_i+k}{p_i-k}\right)^{n_i}k^n \quad \hbox{and} \quad
\sigma_n(k')=\exp\left\{\sum_{j=0}^\infty k'^{2j+1}x_{2j+1}\right\}\prod_{i=1}^\infty\left(\frac{p_i+k'}{p_i-k'}\right)^{n_i}(-k')^{-n},
\end{align}
motivated by the observation in \cite{Fu17a}.
In fact, the Miwa transform
\begin{align*}
\prod_{i=1}^\infty\left(\frac{p_i+k}{p_i-k}\frac{p_i+k'}{p_i-k'}\right)^{n_i}
=\exp\left\{\sum_{j=0}^\infty\left(k^{2j+1}+k'^{2j+1}\right)\frac{2}{2j+1}\sum_{i=1}^\infty\frac{n_i}{p_i^{2j+1}}\right\}
\doteq\exp\left\{\sum_{j=0}^\infty\left(k^{2j+1}+k'^{2j+1}\right)x_{2j+1}\right\}
\end{align*}
indicates that the factors for $n_i$'s are indeed suitable discretisations of the exponents regarding the continuous temporal arguments $x_{2j+1}$.
Our construction is based on a set of linear integral equations given as follows:
\bse\label{Linear}
\begin{align}
&\bu_n(k)+\iint_D\rd\zeta(l,l')\rho_n(k)\Oa(k,l')\sigma_n(l')\bu(l)=\rho_n(k)\bc(k), \label{Integral} \\
&\tbv_n(k')+\iint_D\rd\zeta(l,l')\rho_n(l)\Oa(l,k')\sigma_n(k')\tbv(l')=\tbc(k')\sigma_n(k'), \label{AdIntegral}
\end{align}
\ese
where $\rho_n(k)$ and $\sigma_n(k')$ are plane wave factors defined as \eqref{A:PWF};
$\Oa(k,k')$ is the kernel of the integral equations given by
\begin{align}\label{A:Kernel}
\Oa(k,k')=\frac{1}{k+k'};
\end{align}
$\rd\zeta(k,k')$ and $D$ are arbitrary integration domain and measure without any restriction;
the wave functions $\bu_n(k)$ and $\tbv_n(k')$ are infinite column and row vectors having components depending on $x_i$, $n_i$ and $n$, as well as the respective spectral parameters $k$ and $k'$;
$\bc(k)$ and $\tbc(k')$ are infinite-dimensional column and row vectors, defined as
\begin{align}\label{c}
\bc(k)\doteq{}^{t\!}(\cdots,k^{-1},1,k,\cdots) \quad \hbox{and} \quad
\tbc(k')\doteq(\cdots,k'^{-1},1,k',\cdots),
\end{align}
respectively; in other words, $\bc(k)$ and $\tbc(k')$ have their corresponding $i$th-components given by $k^i$ and $k'^i$.
Here the notation ${}^{t\!}(\cdot)$ denotes the transpose of an infinite matrix or vector.
As the first paper in our series work to solve the problem of searching for integrable discretisations of the Drinfel'd--Sokolov hierarchies from the \ac{DL} method,
we investigate semi-discrete versions of the Drinfel'd--Sokolov equations associated with Kac--Moody algebras $A_{2r}^{(2)}$-, $C_r^{(1)}$- and $D_{r+1}^{(2)}$-types
by focusing on the flow-variables $x_1$, $n_1$ and $n$ in \eqref{A:PWF}.

The following results are achieved.
A novel semi-discrete equation with two discrete and one continuous independent variables of \ac{KP}-type (see equation \eqref{A:u}) is proposed within the \ac{DL} scheme,
which plays a role of a further discretisation of both the third-order differential-difference \ac{KP} equation and the \ac{2D} Toda equation.
Such an equation is proven integrable in the sense of having the standard and adjoint Lax pairs,
in which a new-type discrete linear problem occurs compared with the spectral problems for the discrete and continuous \ac{KP} equations in \eqref{KP}.
New semi-discrete versions of the Drinfel'd--Sokolov hierarchies associated with $A_r^{(1)}$ (which is a byproduct), $A_{2r}^{(2)}$, $C_r^{(1)}$ and $D_{r+1}^{(2)}$
together with their relevant Lax pairs are successfully constructed from the semi-discrete \ac{KP} equation by reductions in the \ac{DL}.
An interesting observation is that the fractional $\mathbb{Z}_\cN$-graded Lax matrices appear in the reduced cases,
which constitute new discretisations of the \ac{FG} Lax representations in the continuous theory.
Our construction, namely the \ac{DL} approach, also guarantees the existence of exact solutions.
This provides a different insight into the essential integrability of all the proposed \ac{3D} and \ac{2D} semi-discrete integrable systems,
in addition to the analysis of Lax pairs.

We organise the paper as follows.
In section \ref{S:DL}, we give a brief introduction to the infinite matrix language of the \ac{DL} approach.
Section \ref{S:SmDisKP} is contributed to the semi-discrete \ac{KP} equation and its Lax representation.
Reductions to the semi-discrete Drinfel'd--Sokolov equations associated with the Kac--Moody Lie algebras $A_r^{(1)}$, $A_{2r}^{(2)}$, $C_r^{(1)}$ and $D_{r+1}^{(2)}$ are presented in section \ref{S:DS}, illustrated by the simplest nontrivial examples.

\section{Preliminaries}\label{S:DL}

This section is a brief introduction to the \ac{DL} framework.
We first define infinite matrices $\bC_n$ and $\bOa$ by
\begin{align}\label{C}
\bC_n\doteq\iint_D\rd\zeta(k,k')\rho_n(k)\bc(k)\tbc(k')\sigma_n(k')
\end{align}
and
\begin{align}\label{Omega}
\tbc(k')\bOa\bc(k)\doteq\Oa(k,k'),
\end{align}
respectively, in which $\rho_n(k)$, $\sigma_n(k')$ and $\Oa(k,k')$ are given by \eqref{A:PWF} and \eqref{A:Kernel}, respectively.
From the definitions, we observe that the infinite matrices $\bC_n$ and $\bOa$ are in a sense the infinite matrix representations of the effective plane wave factor $\rho_n(k)\sigma_n(k')$ and the Cauchy kernel $\Oa(k,k')$, respectively.
To construct integrable systems in the \ac{DL}, we need an infinite potential matrix.
\begin{definition}
The infinite potential matrix in the \ac{DL} is defined as
\begin{align}\label{Potential}
\bU_n\doteq\iint_D\rd\zeta(k,k')\bu_n(k)\tbc(k')\sigma_n(k'),
\end{align}
where $\bu_n(k)$, $\rd\zeta(k,k')$ and $D$ match the ones in the linear integral equation \eqref{u}, $\tbc(k')$ and $\sigma_n(k')$ are given by \eqref{c} and \eqref{A:PWF}, respectively.
\end{definition}

\begin{proposition}
The wave functions $\bu_n(k)$ and $\tbv_n(k')$ satisfy the following equations:
\bse\label{uv}
\begin{align}
&\bu_n(k)=(1-\bU_n\bOa)\bc(k)\rho_n(k), \label{u} \\
&\tbv_n(k')=\sigma_n(k')\tbc(k')(1-\bOa\bU_n). \label{v}
\end{align}
\ese
\end{proposition}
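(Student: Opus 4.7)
The plan is to derive both identities \eqref{u} and \eqref{v} by substituting the Cauchy-kernel representation \eqref{Omega} into the linear integral equations \eqref{Integral} and \eqref{AdIntegral}, and then recognising the resulting double integrals as products involving the potential $\bU_n$. The key algebraic fact is that $\rho_n(k)$, $\sigma_n(k')$ and $\Oa(k,k')$ are scalars, so they can be freely pulled through infinite-matrix/vector products.

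First I would prove \eqref{u}. Rewriting \eqref{Integral} as $\bu_n(k)=\rho_n(k)\bc(k)-\rho_n(k)\iint_D\rd\zeta(l,l')\,\Oa(k,l')\sigma_n(l')\bu_n(l)$ and replacing $\Oa(k,l')$ by $\tbc(l')\bOa\bc(k)$ via \eqref{Omega}, the integrand becomes $\bu_n(l)\,\sigma_n(l')\,\tbc(l')\bOa\bc(k)$ after commuting the scalar factor $\tbc(l')\bOa\bc(k)$ past the column vector $\bu_n(l)$. Pulling $\bOa\bc(k)$ out of the integral, the integral collapses to $\bU_n\bOa\bc(k)$ by the very definition \eqref{Potential}. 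This gives $\bu_n(k)=\rho_n(k)\bc(k)-\rho_n(k)\bU_n\bOa\bc(k)=(1-\bU_n\bOa)\bc(k)\rho_n(k)$, which is \eqref{u}.

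The analogous manipulation on \eqref{AdIntegral}, using $\Oa(l,k')=\tbc(k')\bOa\bc(l)$, yields $\tbv_n(k')=\sigma_n(k')\tbc(k')-\sigma_n(k')\tbc(k')\bOa\,\wt\bU_n$, where $\wt\bU_n\doteq\iint_D\rd\zeta(l,l')\rho_n(l)\bc(l)\tbv_n(l')$. The main obstacle is therefore the identification $\wt\bU_n=\bU_n$, since the potential was defined using $\bu_n$, not $\tbv_n$. I would close this gap by substituting \eqref{u} (just established) into the definition \eqref{Potential}, which gives $\bU_n=(1-\bU_n\bOa)\bC_n$, hence $\bU_n=\bC_n(1+\bOa\bC_n)^{-1}$; and by symmetrically substituting the candidate form $\tbv_n(k')=\sigma_n(k')\tbc(k')(1-\bOa\wt\bU_n)$ into the defining integral for $\wt\bU_n$, which gives $\wt\bU_n=(1+\bC_n\bOa)^{-1}\bC_n$. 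The trivial push-through identity $\bC_n(1+\bOa\bC_n)=(1+\bC_n\bOa)\bC_n$ then forces $\wt\bU_n=\bU_n$, yielding \eqref{v}.

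Two minor points deserve care: one must ensure $(1+\bOa\bC_n)$ (equivalently $(1+\bC_n\bOa)$) is formally invertible so that the uniqueness of the solutions of \eqref{Integral}--\eqref{AdIntegral} legitimises the identification above, and one must track the order of factors carefully, remembering that only fully scalar quantities may be commuted. Once these bookkeeping checks are in place, the proof is a direct substitution.
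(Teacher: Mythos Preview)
Your argument is correct and follows the same approach as the paper: both identities are obtained by inserting the infinite-matrix representation \eqref{Omega} of the Cauchy kernel into the integral equations \eqref{Integral} and \eqref{AdIntegral} and recognising the resulting double integral as the potential matrix. For \eqref{u} your derivation is identical to the paper's one-line justification.

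For \eqref{v} you are in fact more careful than the paper. The paper simply asserts that \eqref{v} is a reformulation of \eqref{AdIntegral} ``in virtue of \eqref{Omega} and \eqref{Potential}'', but, as you rightly observe, the substitution naturally produces the adjoint potential $\wt\bU_n=\iint_D\rd\zeta(l,l')\rho_n(l)\bc(l)\tbv_n(l')$ rather than the $\bU_n$ of \eqref{Potential}. You close this gap by deriving $\bU_n=\bC_n(1+\bOa\bC_n)^{-1}$ and $\wt\bU_n=(1+\bC_n\bOa)^{-1}\bC_n$ and invoking the push-through identity. The paper effectively defers this point: the identity $\bU_n=\bC_n(1+\bOa\bC_n)^{-1}$ appears only in the next Proposition (equation \eqref{U}), and the equality of the two potentials is stated in the subsequent Remark (equation \eqref{AdPotential}), whose justification in turn quotes \eqref{v}. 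Your self-contained treatment avoids this apparent circularity. One small wording issue: the formula $\tbv_n(k')=\sigma_n(k')\tbc(k')(1-\bOa\wt\bU_n)$ is not a ``candidate form'' but is already derived from \eqref{AdIntegral}; the substitution into the defining integral for $\wt\bU_n$ then yields $\wt\bU_n=\bC_n(1-\bOa\wt\bU_n)$ directly, with no circularity.
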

\begin{proof}
Equations \eqref{uv} are the respective reformulations of the integral equations \eqref{Integral} and \eqref{AdIntegral} in virtue of \eqref{Omega} and \eqref{Potential}.
\end{proof}
\begin{proposition}
The infinite matrix $\bU_n$ satisfies
\begin{align}\label{U}
\bU_n=(1-\bU_n\bOa)\bC_n, \quad \hbox{or alternatively} \quad \bU_n=\bC_n(1+\bOa\bC_n)^{-1}.
\end{align}
\end{proposition}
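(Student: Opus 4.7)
The plan is to obtain the first identity by inserting the explicit form of the wave function $\bu_n(k)$ from \eqref{u} into the definition \eqref{Potential} of the potential matrix $\bU_n$, and to derive the second identity as a simple algebraic rearrangement.

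First I would start from
\begin{align*}
\bU_n=\iint_D\rd\zeta(k,k')\,\bu_n(k)\,\tbc(k')\sigma_n(k'),
\end{align*}
and replace $\bu_n(k)$ by $(1-\bU_n\bOa)\bc(k)\rho_n(k)$ using \eqref{u}. The key observation is that the prefactor $(1-\bU_n\bOa)$ is independent of the spectral parameters $k,k'$ and therefore can be pulled outside the integration over $D$. What remains inside the integral is exactly $\rho_n(k)\bc(k)\tbc(k')\sigma_n(k')$, which by \eqref{C} is the integrand defining $\bC_n$. This immediately yields $\bU_n=(1-\bU_n\bOa)\bC_n$.

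For the alternative form I would simply expand the previous identity as $\bU_n+\bU_n\bOa\bC_n=\bC_n$, factor to get $\bU_n(1+\bOa\bC_n)=\bC_n$, and multiply on the right by $(1+\bOa\bC_n)^{-1}$ to obtain $\bU_n=\bC_n(1+\bOa\bC_n)^{-1}$.

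The only conceptual obstacle is justifying that $(1+\bOa\bC_n)$ is invertible as an infinite matrix so that the second form makes sense; this is standard in the \ac{DL} framework and is implicitly guaranteed by the assumed solvability of the linear integral equations \eqref{Integral}--\eqref{AdIntegral}, so I would simply remark that the inverse exists whenever the integral equations admit a unique solution, and not belabour the point. The calculation itself is a one-line substitution followed by one-line algebra.
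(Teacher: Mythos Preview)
Your proposal is correct and essentially identical to the paper's own argument: the paper likewise obtains $\bU_n=(1-\bU_n\bOa)\bC_n$ by integrating \eqref{u} against $\tbc(k')\sigma_n(k')\,\rd\zeta(k,k')$ and invoking the definitions \eqref{Potential} and \eqref{C}, which is exactly your substitution with the factor $(1-\bU_n\bOa)$ pulled out. The algebraic passage to $\bU_n=\bC_n(1+\bOa\bC_n)^{-1}$ and your remark on invertibility are appropriate and match the level of detail in the paper.
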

\begin{proof}
Equation \eqref{U} is obtained by performing the integration $\iint_D\cdot\,\rd\zeta(k,k')$ on \eqref{u}, with the help of \eqref{Potential}.
\end{proof}
\begin{remark}
The infinite matrix $\bU_n$ given by \eqref{U} can equivalently be defined as
\begin{align}\label{AdPotential}
\bU_n\doteq\iint_D\rd\zeta(k,k')\rho_n(k)\bc(k)\tbv_n(k').
\end{align}
This is because such a definition together with \eqref{v} gives rise to $\bU_n=\bC_n(1-\bOa\bU_n)$, which is equivalent to \eqref{U} that follows from the original definition \eqref{Potential}.
\end{remark}
\begin{definition}
The tau function in the \ac{DL} is defined as
\begin{align}\label{tau}
\tau_n\doteq\det(1+\bOa\bC_n), \quad \hbox{or equivalently} \quad \tau_n\doteq\det(1+\bC_n\bOa),
\end{align}
where $\bC_n$ and $\bOa$ are defined by \eqref{C} and \eqref{Omega}, respectively.
\end{definition}
\begin{remark}
The determinant of the infinite matrix in \eqref{tau} should be thought of as the expansion
\begin{align*}
\det(1+\bOa\bC_n)=1+\sum_{i}(\bOa\bC_n)^{(i,i)}+\sum_{i<j}
\left|
\begin{matrix}
(\bOa\bC_n)^{(i,i)} & (\bOa\bC_n)^{(i,j)} \\
(\bOa\bC_n)^{(j,i)} & (\bOa\bC_n)^{(j,j)}
\end{matrix}
\right|+\cdots,
\end{align*}
in which the action $(\cdot)^{(i,j)}$ stands for taking the $(i,j)$-entry of the corresponding infinite matrix.
Notice that all the terms in such an expansion are actually $\tr[(\bOa\bC_n)^i]$ for $i\in\mathbb{N}$.
In order to make sense of the expansion, we also have to impose the condition on integration measure and domain in terms of the spectral variables in the infinite matrix $\bC_n$ defined by \eqref{C} such that $\tr[(\bOa\bC_n)^i]$ truncates, see \cite{NRGO01}.
\end{remark}
The infinite matrix $\bU_n$, the infinite vectors $\bu_n(k)$ and $\tbv_n(k')$, as well as the tau function are the key objects in our approach,
and later we will see that they are the ingredients to construct nonlinear integrable equations, associated linear problems,
and the homogeneous equations of the tau function (which in many (but not all) cases are bilinear), respectively.

In concrete computation in our scheme, we also need infinite matrices $\bLd$, $\tbLd$ and $\bO$.
The respective $(i,j)$-entries of these matrices are defined as
\begin{align}\label{InfMat}
\bO^{(i,j)}=\delta_{i,0}\delta_{0,j}, \quad \bLd^{(i,j)}=\delta_{i+1,j} \quad \hbox{and} \quad \tbLd^{(i,j)}=\delta_{i,j+1},
\end{align}
for all $i,j\in\mathbb{Z}$, in which $\delta_{\cdot,\cdot}$ denotes the standard Kroneker delta function, namely
\begin{align*}
\delta_{i,j}=
\left\{
\begin{array}{ll}
1, & i=j, \\
0, & i\neq j,
\end{array}
\right. \quad
\forall i,j\in\mathbb{Z}.
\end{align*}
Through direct calculation we are able to prove that the infinite matrices $\bLd$, $\tbLd$ and $\bO$ possess properties as follows:
\begin{align*}
&\bLd^i\bc(k)=k^i\bc(k), \quad \tbc(k')\tbLd^j=k'^j\tbc(k'), \quad \tbc(k')\tbLd^j\bO\bLd^i\bc(k)=k^ik'^j, \\
&(\bLd^{i'}\bU_n)^{(i,j)}=\bU_n^{(i+i',j)}, \quad (\bU_n\tbLd^{j'})^{(i,j)}=\bU_n^{(i,j+j')}, \quad (\bU_n\tbLd^{j'}\bO\bLd^{i'}\bU_n)^{(i,j)}=\bU_n^{(i,j')}\bU_n^{(i',j)}, \\
&[\bLd^{i'}\bu_n(k)]^{(i)}=[\bu_n(k)]^{(i+i')}, \quad [\tbv_n(k')\tbLd^{j'}]^{(j)}=[\tbv_n(k')]^{(j+j')},
\end{align*}
where $(\cdot)^{(i)}$ denotes taking the $i$th-component of an infinite vector.
These properties transfer the operations of $\bLd$, $\tbLd$ and $\bO$ to the shifts of the components or entries of infinite vectors and matrices,
which will later play a crucial role in establishing connections between discrete/continuous dynamics and purely algebraic relations of infinite matrices in the construction of integrable models.

\section{A semi-discrete Kadomtsev--Petviashvili equation}\label{S:SmDisKP}

\subsection{Algebraic construction of the closed-form semi-discrete equation}

Our aim in this subsection is to construct closed-form scalar \ac{3D} semi-discrete equations of $x_1$, $n_1$ and $n$ expressed by entries of the infinite matrix $\bU_n$.

We first derive the dynamical relations of $\bU_n$ based on the objects introduced in section \ref{S:DL}.
These relations in a sense form the infinite matrix representation of the resulting \ac{3D} semi-discrete equations.
The plane wave factors in \eqref{A:PWF} imply that the infinite matrix $\bC_n$ satisfies the dynamics as follows:
\bse\label{A:CDyn}
\begin{align}
&\partial_1\bC_n=\bLd\bC_n+\bC_n \tbLd, \label{A:CDyna} \\
&\wt\bC_n\frac{p_1-\tbLd}{p_1+\tbLd}=\frac{p_1+\bLd}{p_1-\bLd}\bC_n, \label{A:CDynb} \\
&\bC_{n+1}(-\tbLd)=\bLd\bC_n. \label{A:CDync}
\end{align}
\ese
These equations follow from $\bC_n$ through direct computation.
Meanwhile, we are able to deduce from \eqref{A:Kernel} that
\begin{align}\label{A:OaDyn}
\bOa\bLd+\tbLd\bOa=\bO.
\end{align}
Equations \eqref{A:CDyn} and \eqref{A:OaDyn} together determine dynamical evolutions of the infinite matrix $\bU_n$.
We list them in the following proposition.
\begin{proposition}
The infinite matrix $\bU_n$ satisfies dynamical evolutions as follows:
\bse\label{A:UDyn}
\begin{align}
&\partial_1\bU_n=\bLd\bU_n+\bU_n\tbLd-\bU_n\bO\bU_n, \label{A:UDyna} \\
&\wt\bU_n\frac{p_1-\tbLd}{p_1+\bLd}=\frac{p_1+\bLd}{p_1-\bLd}\bU_n-2p_1\wt\bU_n\frac{1}{p_1+\tbLd}\bO\frac{1}{p_1-\bLd}\bU_n, \label{A:UDynb} \\
&\bU_{n+1}(-\tbLd)=\bLd\bU_n-\bU_{n+1}\bO\bU_n. \label{A:UDync}
\end{align}
\ese
\end{proposition}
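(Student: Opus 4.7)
The plan is to derive each of \eqref{A:UDyna}--\eqref{A:UDync} by the same template: start from the algebraic definition \eqref{U}, apply the appropriate derivative or shift using the $\bC_n$-dynamics \eqref{A:CDyn}, and simplify by pushing $\bOa$-factors past $\bLd$ via the commutator \eqref{A:OaDyn}. A useful preliminary is that the two equivalent forms $(1-\bU_n\bOa)\bC_n = \bU_n = \bC_n(1-\bOa\bU_n)$ coming from \eqref{U} together yield the balance identity
\begin{equation*}
\bU_n\bOa\bC_n \;=\; \bC_n\bOa\bU_n \;=\; \bC_n-\bU_n,
\end{equation*}
which I will invoke repeatedly to trade $\bC_n$'s for $\bU_n$'s.

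For \eqref{A:UDyna}, I would differentiate $(1+\bC_n\bOa)\bU_n=\bC_n$ with respect to $x_1$ and solve for $\partial_1\bU_n = (1+\bC_n\bOa)^{-1}(\partial_1\bC_n)(1-\bOa\bU_n)$. Substituting \eqref{A:CDyna} and using $(1+\bC_n\bOa)^{-1}=1-\bU_n\bOa$ together with $(1+\bC_n\bOa)^{-1}\bC_n=\bU_n$, the two linear pieces produce $\bLd\bU_n$ and $\bU_n\tbLd$ directly, while the surplus generated by commuting $\bOa\bLd=\bO-\tbLd\bOa$ telescopes to $-\bU_n\bO\bU_n$.

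For \eqref{A:UDync}, the cleanest route is first to establish, by combining \eqref{A:CDync} with \eqref{A:OaDyn}, the intertwining identity
\begin{equation*}
(1+\bOa\bC_{n+1})(-\tbLd) \;=\; -\tbLd(1+\bOa\bC_n) + \bO\bC_n.
\end{equation*}
Left-multiplying by $\bC_{n+1}(1+\bOa\bC_{n+1})^{-1}=\bU_{n+1}$ (so that the left-hand side collapses to $\bC_{n+1}(-\tbLd)=\bLd\bC_n$ via \eqref{A:CDync}) and right-multiplying by $(1+\bOa\bC_n)^{-1}$ (which turns the remaining $\bC_n$-factors into $\bU_n$-factors) yields $\bLd\bU_n = -\bU_{n+1}\tbLd + \bU_{n+1}\bO\bU_n$, which is \eqref{A:UDync}. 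The derivation of \eqref{A:UDynb} follows the same template after the additional preliminary
\begin{equation*}
\frac{1}{p_1+\tbLd}\,\bO\,\frac{1}{p_1-\bLd} \;=\; \frac{1}{p_1+\tbLd}\,\bOa \;-\; \bOa\,\frac{1}{p_1-\bLd},
\end{equation*}
a direct consequence of $\bOa(p_1-\bLd)=(p_1+\tbLd)\bOa-\bO$; this rewrites the $\bO$-sandwich on the right-hand side of \eqref{A:UDynb} as a telescoping difference, after which the $\bC_n$-shift \eqref{A:CDynb} together with the balance identity (and its tilde-shifted counterpart for $\wt\bC_n$, $\wt\bU_n$) closes the computation.

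The main obstacle will be \eqref{A:UDynb}: the rational dependence on the B\"acklund parameter $p_1$ and the mismatch in the positions in which $\bLd$ and $\tbLd$ enter the factors $(p_1\pm\bLd)^{\pm1}$, $(p_1\pm\tbLd)^{\pm1}$ prevents naive commutation, so some care is needed to combine \eqref{A:CDynb} with the displayed commutation identity while keeping track of which $\bOa$ acts on which side. The other two evolutions are essentially mechanical once this template is set up.
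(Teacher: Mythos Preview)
Your approach is essentially the same as the paper's: differentiate or shift the defining relation \eqref{U}, feed in the $\bC_n$-dynamics \eqref{A:CDyn}, and commute $\bOa$ past $\bLd$ via \eqref{A:OaDyn} before closing with $\bC_n(1+\bOa\bC_n)^{-1}=\bU_n$. The paper organises \eqref{A:UDynb} using the equivalent identity $\bOa\frac{p_1+\bLd}{p_1-\bLd}-\frac{p_1-\tbLd}{p_1+\tbLd}\bOa=2p_1\frac{1}{p_1+\tbLd}\bO\frac{1}{p_1-\bLd}$, so the only thing to fix is a sign slip in your displayed preliminary: from $\bOa(p_1-\bLd)=(p_1+\tbLd)\bOa-\bO$ one gets $\frac{1}{p_1+\tbLd}\bO\frac{1}{p_1-\bLd}=\bOa\frac{1}{p_1-\bLd}-\frac{1}{p_1+\tbLd}\bOa$, not the other way around.
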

\begin{proof}
See appendix \ref{A:UDynPf}.
\end{proof}

The significance of system \eqref{A:UDyn} is that it establishes the connection
between the dynamical evolutions of $\bU_n$ and the purely algebraic operations involving $\bU_n$, $\bLd$, $\tbLd$ and $\bO$.
These are important relations to construct closed-form semi-discrete equations based on the infinite matrix system \eqref{A:UDyn},
through identities in terms of the entries of $\bU_n$.
For convenience, we introduce the following new variables based on the entries of the infinite matrix $\bU_n$ as follows:
\begin{align*}
&u_n=\bU_n^{(0,0)}, \quad v_n=1-\left(\bU_n\tbLd^{-1}\right)^{(0,0)}, \quad w_n=1-\left(\bLd^{-1}\bU_n\right)^{(0,0)}, \\
&V_n(a)=1-\left(\bU_n\frac{1}{a+\tbLd}\right)^{(0,0)}, \quad W_n(a)=1-\left(\frac{1}{a+\bLd}\bU_n\right)^{(0,0)}, \quad S_n(a,b)=\left(\frac{1}{a+\bLd}\bU_n\frac{1}{b+\tbLd}\right)^{(0,0)}.
\end{align*}
We note that in the definitions of $V_n(a)$, $W_n(a)$ and $S_n(a,b)$ the fractional linear expression of $\bLd$ and $\tbLd$ should be understood as their respective formal series expansions, namely these variables are determined by an infinite number of entries of $\bU_n$.

In our algebraic construction, a very important closed-form equation is the one expressed by the tau function.
For this reason, we first derive possible dynamical evolutions of the tau function \eqref{tau} with respect to the independent variables.
In the following proposition, we list the most fundamental ones.
\begin{proposition}
The tau function satisfies dynamical evolutions as follows:
\bse\label{A:tauDyn}
\begin{align}
&\partial_1\ln\tau_n=u_n, \label{A:tauDyna} \\
&\frac{\wt\tau_n}{\tau_n}=1+2p_1S_n(-p_1,-p_1), \quad \frac{\ut\tau_n}{\tau_n}=1-2p_1S_n(p_1,p_1), \label{A:tauDynb} \\
&\frac{\tau_{n+1}}{\tau_n}=v_n \quad \hbox{and} \quad \frac{\tau_{n-1}}{\tau_n}=w_n. \label{A:tauDync}
\end{align}
\ese
\end{proposition}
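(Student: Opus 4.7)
The plan is to apply the Jacobi formula $\partial\ln\det M=\tr(M^{-1}\partial M)$ to $M=1+\bOa\bC_n$ for each of the three evolutions, and then to use the dynamical relations \eqref{A:CDyn} together with the algebraic identity $\bOa\bLd+\tbLd\bOa=\bO$ from \eqref{A:OaDyn} to convert everything into traces or determinants of rank-one objects built out of $\bU_n$. The key observation is that $\bO$ is rank one (its only nonzero entry is $\bO^{(0,0)}=1$), so every variation of $\bOa\bC_n$ turns out to be a rank-one update, and the identity $\det(1+ab^{\rT})=1+b^{\rT}a$ collapses the ratio of tau functions into a scalar.

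For \eqref{A:tauDyna}, I would substitute $\partial_1\bC_n=\bLd\bC_n+\bC_n\tbLd$ from \eqref{A:CDyna} into $\partial_1\ln\tau_n=\tr(M^{-1}\bOa\partial_1\bC_n)$, use cyclicity of the trace together with the two relations $\bC_nM^{-1}=\bU_n$ and $M^{-1}\bOa\bC_n=\bOa\bC_nM^{-1}=\bOa\bU_n$ (the latter holding because $M$ and $\bOa\bC_n$ commute), and combine the two resulting terms as $\tr\bigl((\bOa\bLd+\tbLd\bOa)\bU_n\bigr)=\tr(\bO\bU_n)=\bU_n^{(0,0)}=u_n$.

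For \eqref{A:tauDynb}, I would first rewrite \eqref{A:CDynb} as $\wt\bC_n=\frac{p_1+\bLd}{p_1-\bLd}\bC_n\frac{p_1+\tbLd}{p_1-\tbLd}$ and cycle the outer factors inside the determinant so that $\det(1+\bOa\wt\bC_n)=\det(1+B\bOa A\bC_n)$ with $A=\frac{p_1+\bLd}{p_1-\bLd}$ and $B=\frac{p_1+\tbLd}{p_1-\tbLd}$. Applying \eqref{A:OaDyn} twice, in the forms $\bOa(p_1+\bLd)=(p_1-\tbLd)\bOa+\bO$ and $(p_1+\tbLd)\bOa=\bOa(p_1-\bLd)+\bO$, to push the fractional factors through $\bOa$ should yield the rank-one correction
\begin{align*}
B\bOa A=\bOa+2p_1\frac{1}{p_1-\tbLd}\bO\frac{1}{p_1-\bLd},
\end{align*}
where the coefficient $2p_1$ arises from the telescoping $1+\frac{p_1+\tbLd}{p_1-\tbLd}=\frac{2p_1}{p_1-\tbLd}$. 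Factoring out $M$ and applying the rank-one determinant formula then gives $\wt\tau_n/\tau_n=1+2p_1\bigl[(p_1-\bLd)^{-1}\bU_n(p_1-\tbLd)^{-1}\bigr]^{(0,0)}=1+2p_1S_n(-p_1,-p_1)$. The formula for $\ut\tau_n/\tau_n$ follows by the analogous computation after inverting \eqref{A:CDynb} to express $\ut\bC_n$ in terms of $\bC_n$, which flips signs and produces $1-2p_1S_n(p_1,p_1)$. Finally, for \eqref{A:tauDync} I would write $\bC_{n+1}=-\bLd\bC_n\tbLd^{-1}$ from \eqref{A:CDync}, apply \eqref{A:OaDyn} in the form $\tbLd^{-1}\bOa\bLd=\tbLd^{-1}\bO-\bOa$ to reduce $\tau_{n+1}$ to $\det(M-\tbLd^{-1}\bO\bC_n)$, and use the rank-one formula again on $\tbLd^{-1}\bO\bC_n$ to conclude $\tau_{n+1}/\tau_n=1-(\bU_n\tbLd^{-1})^{(0,0)}=v_n$; the case $\tau_{n-1}/\tau_n$ is the mirror image.

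The main obstacle is the clean derivation of the rank-one identity for $B\bOa A$: the non-commutativity of $\bLd$, $\tbLd$ and $\bOa$ means one must carefully track the order in which factors are pushed through $\bOa$, and the crucial point is that the two remainder terms generated when pushing $A$ and $B$ separately combine, via the telescoping identity above, into the single rank-one object $2p_1(p_1-\tbLd)^{-1}\bO(p_1-\bLd)^{-1}$. Once this identity is in hand, the rest of the argument is simply cyclic traces and the rank-one determinant formula.
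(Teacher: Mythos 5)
Your proposal is correct and follows essentially the same route as the paper: the $\ln\det=\tr\ln$ (Jacobi) identity combined with \eqref{A:CDyna}, \eqref{A:OaDyn} and \eqref{U} for the $x_1$-derivative, and the rank-one determinant (Weinstein--Aronszajn) formula applied to the rank-one update $2p_1\frac{1}{p_1-\tbLd}\bO\frac{1}{p_1-\bLd}\bC_n$ (respectively $\tbLd^{-1}\bO\bC_n$) for the shifts in $n_1$ and $n$. The key commutation identity $B\bOa A=\bOa+2p_1\frac{1}{p_1-\tbLd}\bO\frac{1}{p_1-\bLd}$ that you isolate, including the telescoping producing the coefficient $2p_1$, is exactly the step the paper carries out implicitly in passing to $\det\bigl(1+\bOa\bC_n+2p_1\frac{1}{p_1-\tbLd}\bO\frac{1}{p_1-\bLd}\bC_n\bigr)$.
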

\begin{proof}
See appendix \ref{A:tauDynPf}.
\end{proof}
The relations listed in \eqref{A:tauDyn} establish the connection between the tau function and the new variables.
Then the idea is to search for a relation between the new variables based on the dynamics of $\bU_n$ given by \eqref{A:UDyn},
and consequently to construct a closed-form equation of the tau function.
We conclude the result as the following theorem.

\begin{theorem}
The tau function defined by \eqref{tau} satisfies the \ac{DDeltaE}
\begin{align}\label{A:tau}
\partial_1\ln\left(2p_1+\partial_1\ln\frac{\tau_n}{\wt\tau_n}\right)
=\partial_1\ln\frac{\tau_{n+1}\tau_{n-1}}{\tau_n^2}
+\left(2p_1+\partial_1\ln\frac{\tau_n}{\wt\tau_n}\right)
-\frac{\displaystyle 2p_1+\partial_1\ln\frac{\tau_{n+1}}{\wt\tau_{n+1}}}{\displaystyle 1+\frac{\wt{\tau}_n\tau_{n+1}}{\tau_n\wt{\tau}_{n+1}}}
-\frac{\displaystyle 2p_1+\partial_1\ln\frac{\tau_{n-1}}{\wt\tau_{n-1}}}{\displaystyle 1+\frac{\wt{\tau}_n\tau_{n-1}}{\tau_n\wt{\tau}_{n-1}}}.
\end{align}
\end{theorem}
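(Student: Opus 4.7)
The plan is to reduce \eqref{A:tau} to an equivalent scalar identity in the variables $u_n,v_n,w_n$ and their $\wt{}$-shifts, and to verify that identity using the matrix dynamical relations \eqref{A:UDyn}. Using the dictionary \eqref{A:tauDyn}, the combinations appearing in \eqref{A:tau} translate as $2p_1+\partial_1\ln(\tau_n/\wt\tau_n)=2p_1+u_n-\wt u_n$, $\partial_1\ln(\tau_{n+1}\tau_{n-1}/\tau_n^2)=u_{n+1}+u_{n-1}-2u_n$, and $1+\wt\tau_n\tau_{n\pm 1}/(\tau_n\wt\tau_{n\pm 1})=1+v_n/\wt v_n$ respectively $1+w_n/\wt w_n$. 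Thus \eqref{A:tau} is equivalent to the scalar differential--difference identity
\begin{align*}
\partial_1\ln(2p_1+u_n-\wt u_n)&=\partial_1\ln(v_nw_n)+(2p_1+u_n-\wt u_n)\\
&\quad-\frac{2p_1+u_{n+1}-\wt u_{n+1}}{1+v_n/\wt v_n}-\frac{2p_1+u_{n-1}-\wt u_{n-1}}{1+w_n/\wt w_n}.
\end{align*}

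I would then extract the necessary scalar identities by taking selected $(i,j)$-entries of \eqref{A:UDyn}. From the $(0,0)$, $(0,-1)$, $(-1,0)$ entries of \eqref{A:UDyna} one obtains expressions for $\partial_1 u_n$, $\partial_1 v_n$ and $\partial_1 w_n$ as polynomials in a finite number of entries of $\bU_n$. From the same entries of \eqref{A:UDync} one obtains algebraic relations such as $\bU_n^{(1,0)}+\bU_{n+1}^{(0,1)}=u_nu_{n+1}$, $\bU_n^{(1,-1)}=-u_{n+1}w_n$ and $\bU_{n+1}^{(-1,1)}=-u_nw_{n+1}$, which eliminate the extra matrix entries in favour of the variables of interest together with their $n$-shifts. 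The crucial ingredient is \eqref{A:UDynb}: expanding the fractional operators $1/(p_1\pm\bLd)$ and $1/(p_1\pm\tbLd)$ as formal series in $p_1^{-1}$ and taking the $(0,0)$-entry, together with the identity $\wt\tau_n/\tau_n=1+2p_1S_n(-p_1,-p_1)$ from \eqref{A:tauDynb}, should yield a B\"acklund-type factorisation expressing $2p_1+u_n-\wt u_n$ through entries of $\wt\bU_n$ and $\bU_n$ modulated by the auxiliary quantities $V_n(-p_1),W_n(-p_1),S_n(-p_1,-p_1)$.

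The main obstacle will be eliminating these auxiliary quantities in favour of the three basic variables $u_n,v_n,w_n$ and their $\wt{}$- and $n$-shifts, since each of $V_n,W_n,S_n$ depends on infinitely many entries of $\bU_n$ and only a judicious combination of the matrix identities above will cause the auxiliaries to telescope out. The denominator structure $1+v_n/\wt v_n$ in \eqref{A:tau} is a strong hint that the elimination produces a compatibility relation of the shape $(2p_1+u_n-\wt u_n)\wt v_n=v_n(2p_1+u_{n+1}-\wt u_{n+1})+\text{algebraic correction}$, from which the quotient on the right-hand side of \eqref{A:tau} arises after solving for the numerator. Once such a compatibility relation and its $w_n$-counterpart are in place, differentiating them in $x_1$ via \eqref{A:UDyna} and the evolution equations for $v_n$ and $w_n$, then regrouping and translating back through \eqref{A:tauDyn}, should produce exactly \eqref{A:tau}.
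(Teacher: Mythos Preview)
Your reformulation of \eqref{A:tau} in terms of $u_n,v_n,w_n$ is correct, and your instinct to introduce the auxiliary quantities $V_n(-p_1),W_n(-p_1),S_n(-p_1,-p_1)$ coming from the fractional operators in \eqref{A:UDynb} is exactly right. But the plan to \emph{eliminate} these auxiliaries in favour of $u_n,v_n,w_n$ is where the argument goes astray. Each of $V_n(-p_1),W_n(-p_1)$ genuinely depends on infinitely many entries of $\bU_n$, and there is no compatibility relation of the shape $(2p_1+u_n-\wt u_n)\wt v_n=v_n(2p_1+u_{n+1}-\wt u_{n+1})+\cdots$ involving only $u_n,v_n,w_n$ that produces the required quotients. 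Trying to expand $(p_1\pm\bLd)^{-1}$ as a series in $p_1^{-1}$ and truncate will not close.

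The paper does the opposite: it \emph{keeps} $V_n(-p_1)$ and $W_n(-p_1)$ as the primary variables and shows that precisely the combinations needed admit closed expressions in $\tau_n$. Concretely, combining the $x_1$-evolution $[\eqref{A:UDyna}\,\frac{1}{a+\tbLd}]^{(0,0)}$ with the $n$-evolution $[\eqref{A:UDync}\,\frac{1}{a+\tbLd}]^{(0,0)}$ cancels the awkward term $(\bLd\bU_n\frac{1}{a+\tbLd})^{(0,0)}$ and yields $\partial_1\ln V_n(-p_1)=u_{n+1}-u_n+p_1\big(1-\tfrac{V_{n+1}(-p_1)}{V_n(-p_1)}\big)$, with an analogous identity for $W_n$. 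Adding these gives an equation for $\partial_1\ln[V_n(-p_1)W_n(-p_1)]$ whose right-hand side involves only $u_n$ and the ratios $V_{n+1}/V_n,\ W_{n-1}/W_n$. The remaining step is to recognise, via $[\frac{1}{a+\bLd}\eqref{A:UDyna}\frac{1}{b+\tbLd}]^{(0,0)}$ and $[\frac{1}{a+\bLd}\eqref{A:UDync}\frac{1}{b+\tbLd}]^{(0,0)}$ at $a=b=-p_1$ together with \eqref{A:tauDynb}, that $V_n(-p_1)W_n(-p_1)=\tfrac{1}{2p_1}\tfrac{\wt\tau_n}{\tau_n}\big(2p_1+\partial_1\ln\tfrac{\tau_n}{\wt\tau_n}\big)$ and $V_n(-p_1)W_{n+1}(-p_1)=\tfrac12\big(\tfrac{\wt\tau_n}{\tau_n}+\tfrac{\wt\tau_{n+1}}{\tau_{n+1}}\big)$; dividing the latter into the former gives the ratios in closed $\tau$-form. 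Substituting back yields \eqref{A:tau} directly. The lesson: do not telescope the auxiliaries out---identify the right \emph{combinations} of them that are $\tau$-expressible.
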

\begin{proof}
See appendix \ref{A:tauPf}.
\end{proof}
Equation \eqref{A:tau} is effectively a homogenous equation of degree 6, and cannot be written as a scalar bilinear equation in Hirota's form.
In addition to the scalar form expressed by the tau function, i.e. \eqref{A:tau}, we are also interested in other nonlinear forms.
For example, we are able to derive a $(2+1)$-dimensional semi-discrete equation expressed by the potential $u_n=\bU_n^{(0,0)}$ from \eqref{A:tau}, with the help of \eqref{A:tauDyna}.
This implies that we have the following theorem.
\begin{theorem}
For arbitrary solution $\bu_n(k)$ to the linear integral equation \eqref{Integral}, the variable $u_n=\bU_n^{(0,0)}$ following from the infinite potential matrix \eqref{Potential} provides a solution of the semi-discrete equation
\begin{align}\label{A:u}
\partial_1\ln(2p_1+u_n-\wt u_n)={}&u_{n+1}-2u_n+u_{n-1}+\left(2p_1+u_n-\wt u_n\right) \nonumber \\
&-\frac{2p_1+u_{n+1}-\wt u_{n+1}}{1+\exp\left[\partial_1^{-1}(u_{n+1}-\wt u_{n+1})-\partial_1^{-1}(u_n-\wt u_n)\right]}
-\frac{2p_1+u_{n-1}-\wt u_{n-1}}{1+\exp\left[\partial_1^{-1}(u_{n-1}-\wt u_{n-1})-\partial_1^{-1}(u_n-\wt u_n)\right]}.
\end{align}
\end{theorem}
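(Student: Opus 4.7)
The plan is to derive \eqref{A:u} from the tau-function equation \eqref{A:tau} by substituting $u_n=\partial_1\ln\tau_n$ everywhere. Equation \eqref{A:tauDyna} identifies $u_n$ as the logarithmic derivative of $\tau_n$ with respect to $x_1$, so \eqref{A:u} should emerge as the direct ``potential form'' of \eqref{A:tau}.

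First I would record the translation dictionary. From $\partial_1\ln\tau_n=u_n$ one immediately gets
\[
\partial_1\ln\frac{\tau_n}{\wt\tau_n}=u_n-\wt u_n,\qquad
\partial_1\ln\frac{\tau_{n+1}\tau_{n-1}}{\tau_n^{2}}=u_{n+1}-2u_n+u_{n-1},
\]
so the logarithmic derivatives that appear on both sides of \eqref{A:tau} are already visible in the target \eqref{A:u}. For the exponential expression in the denominators, I integrate once in $x_1$: from $\partial_1\ln(\tau_n/\wt\tau_n)=u_n-\wt u_n$ I would write $\ln(\tau_n/\wt\tau_n)=\partial_1^{-1}(u_n-\wt u_n)$ (the integration constant being absorbed in the standing normalisation of $\tau_n$), whence
\[
\frac{\wt\tau_n\tau_{n+1}}{\tau_n\wt\tau_{n+1}}
=\exp\!\Bigl[\ln\tfrac{\tau_{n+1}}{\wt\tau_{n+1}}-\ln\tfrac{\tau_n}{\wt\tau_n}\Bigr]
=\exp\!\Bigl[\partial_1^{-1}(u_{n+1}-\wt u_{n+1})-\partial_1^{-1}(u_n-\wt u_n)\Bigr],
\]
and analogously for the $n-1$ denominator. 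Plugging these three substitutions into \eqref{A:tau} produces \eqref{A:u} term by term.

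The only genuine point requiring care is the meaning of the operator $\partial_1^{-1}$ in the final statement. I would explain it as the integration primitive $\ln(\tau_n/\wt\tau_n)$ determined by the \ac{DL} construction itself, which by \eqref{A:tauDynb} is a well-defined quantity ($1+2p_1S_n(-p_1,-p_1)$) with no ambiguity; consequently the expression inside the exponentials is unambiguous on solutions coming from the integral equation \eqref{Integral}. With this understanding, the substitution is exact and \eqref{A:u} is an identity on the image of the \ac{DL} map.

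I do not anticipate a serious obstacle: the theorem is essentially the ``dehomogenised'' restatement of \eqref{A:tau} through \eqref{A:tauDyna}. The trickiest bookkeeping is just matching the signs in the shifted quotient $\wt\tau_n\tau_{n\pm1}/(\tau_n\wt\tau_{n\pm1})$ with the corresponding $\partial_1^{-1}$ combinations in \eqref{A:u}, which is settled by the short computation above.
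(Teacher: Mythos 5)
Your proposal is correct and is essentially the paper's own argument: the paper obtains \eqref{A:u} from the tau-function equation \eqref{A:tau} precisely by the substitution $u_n=\partial_1\ln\tau_n$ from \eqref{A:tauDyna}, with the quotient $\wt{\tau}_n\tau_{n\pm1}/(\tau_n\wt{\tau}_{n\pm1})$ rewritten as the exponential of $\partial_1^{-1}(u_{n\pm1}-\wt u_{n\pm1})-\partial_1^{-1}(u_n-\wt u_n)$, exactly as in your translation dictionary. Your remark on interpreting $\partial_1^{-1}(u_n-\wt u_n)$ as $\ln(\tau_n/\wt\tau_n)$ (minus the logarithm of the well-defined quantity $1+2p_1S_n(-p_1,-p_1)$ from \eqref{A:tauDynb}) is a sound clarification of a point the paper leaves implicit.
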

\begin{remark}
Alternatively, for arbitrary solution $\tbv_n(k')$ of the linear integral equation \eqref{AdIntegral}, the variable $u_n=\bU_n^{(0,0)}$ following from \eqref{AdPotential} provides a solution of \eqref{A:u}. This is a parallel result to the above theorem.
\end{remark}
To the best of the authors' knowledge, equation \eqref{A:u} is a novel semi-discrete equations that has not appeared in the literature.
Equation \eqref{A:u} is nonlocal in the sense that it involves an integration with respect to the continuous independent variable $x_1$.
We can, of course, take $\ln\tau_n$ as the nonlinear potential instead of $\partial_1\ln\tau_n$.
As a result, we are able to obtain from \eqref{A:tau} a \ac{DDeltaE} which is second-order in $x_1$ and $n$ and first-order in $n_1$.
However, we still prefer the potential $u_n$, as it is widely adopted in the theory of continuous integrable systems.
Equation \eqref{A:u} is the `standard' (versus the notion of `modified') equation in its potential form, as the potential $u_n$ is the first-order derivative of the logarithm of $\tau_n$.
By introducing a new variable $U_n\doteq\wt u_n-u_n$, we are able to derive
\begin{align}\label{A:U}
\partial_1\ln\left(2p_1-\wt U_n\right)-\partial_1\ln\left(2p_1-U_n\right)={}&U_{n+1}-2U_n+U_{n-1}+(U_n-\wt U_n) \nonumber \\
&+\frac{2p_1-U_{n+1}}{1+\exp\left[\partial_1^{-1}U_{n}-\partial_1^{-1}U_{n+1}\right]}-\frac{2p_1-\wt U_{n+1}}{1+\exp\left[\partial_1^{-1}\wt U_{n}-\partial_1^{-1}\wt U_{n+1}\right]} \nonumber \\
&+\frac{2p_1-U_{n-1}}{1+\exp\left[\partial_1^{-1}U_{n}-\partial_1^{-1}U_{n-1}\right]}-\frac{2p_1-\wt U_{n-1}}{1+\exp\left[\partial_1^{-1}\wt U_{n}-\partial_1^{-1}\wt U_{n-1}\right]},
\end{align}
which we refer to as the nonpotential form of \eqref{A:u}.
\begin{remark}
In the continuous theory, there also exist modified equations expressed by $v_n$ and $w_n$.
Notice that there is a transform \eqref{A:tauDync} between $v_n$ and $\tau_n$.
We can conclude that for arbitrary solution $\bu_n(k)$ of \eqref{Integral}, the variable $v_n=1-\bU_n^{(0,-1)}$ is a solution to the semi-discrete equation
\begin{align}\label{A:v}
\partial_1\ln\left(2p_1+\partial_1\Delta^{-1}\ln\frac{v_n}{\wt v_n}\right)=&{}\partial_1\ln\frac{v_n}{v_{n-1}}+\left(2p_1+\partial_1\Delta^{-1}\ln\frac{v_n}{\wt v_n}\right)
-\frac{\displaystyle 2p_1+\partial_1\Delta^{-1}\ln\frac{v_{n+1}}{\wt v_{n+1}}}{\displaystyle 1+\frac{v_n}{\wt v_n}}
-\frac{\displaystyle 2p_1+\partial_1\Delta^{-1}\ln\frac{v_{n-1}}{\wt v_{n-1}}}{\displaystyle 1+\frac{\wt v_{n-1}}{v_{n-1}}},
\end{align}
where $\Delta$ is a difference operator defined by $\Delta v_n\doteq v_{n+1}-v_n$, and $\Delta^{-1}$ is its inverse satisfying $\Delta\Delta^{-1}=\Delta^{-1}\Delta=\mathrm{id}$.
Similarly, a semi-discrete equation for $w_n=1-\bU_n^{(-1,0)}$ can also be derived through the identity $v_nw_{n+1}=1$.
We would also like to comment that our attention will not be paid to these modified equations in the sections below,
mainly because the nonlocality in terms of $n$ in equation \eqref{A:v} will raise divergence when we perform reductions towards the semi-discrete Drinfel'd--Sokolov hierarchies.
\end{remark}

\subsection{Associated linear problems}

The proposed equation \eqref{A:u} is integrable in the sense that it possesses two Lax pairs (i.e. the Lax pair and its adjoint).
These are constructed based on the wave functions $\bu_n(k)$ and $\tbv_n(k')$ within the infinite matrix scheme.
\begin{proposition}
The wave function $\bu_n(k)$ satisfies dynamical evolutions as follows:
\bse\label{A:uDyn}
\begin{align}
&\partial_1\bu_n(k)=\bLd\bu_n(k)-\bU_n\bO\bu_n(k), \label{A:uDyna} \\
&\wt\bu_n(k)=\frac{p_1+\bLd}{p_1-\bLd}\bu_n(k)-2p_1\wt\bU_n\frac{1}{p_1+\tbLd}\bO\frac{1}{p_1-\bLd}\bu_n(k), \label{A:uDynb} \\
&\bu_{n+1}(k)=\bLd\bu_n(k)-\bU_{n+1}\bO\bu_n(k). \label{A:uDync}
\end{align}
\ese
\end{proposition}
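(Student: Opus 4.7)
The plan is to exploit the representation $\bu_n(k)=(1-\bU_n\bOa)\bc(k)\rho_n(k)$ supplied by \eqref{u}, apply the relevant derivative or shift, and substitute the already-established dynamics of $\bU_n$ from \eqref{A:UDyn} together with the Cauchy-kernel identity \eqref{A:OaDyn}. Because $\bLd$ acts on $\bc(k)$ as multiplication by $k$, the scalar dynamics of the plane-wave factor $\rho_n(k)$ in \eqref{A:PWF} translate directly into infinite-matrix form, giving the ``free'' seed evolutions
\begin{align*}
\partial_1[\bc(k)\rho_n(k)]=\bLd\bc(k)\rho_n(k),\quad \wt{\bc(k)\rho_n(k)}=\frac{p_1+\bLd}{p_1-\bLd}\bc(k)\rho_n(k),\quad \bc(k)\rho_{n+1}(k)=\bLd\bc(k)\rho_n(k),
\end{align*}
on which all three relations in \eqref{A:uDyn} will be built.

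For \eqref{A:uDyna}, differentiating $\bu_n(k)=(1-\bU_n\bOa)\bc(k)\rho_n(k)$ produces $-\partial_1\bU_n\cdot\bOa\bc(k)\rho_n(k)+(1-\bU_n\bOa)\bLd\bc(k)\rho_n(k)$. Substituting $\partial_1\bU_n$ from \eqref{A:UDyna} and using $\bOa\bLd=\bO-\tbLd\bOa$ from \eqref{A:OaDyn}, the terms involving $\bU_n\tbLd\bOa$ and $\bU_n\bOa\bLd$ cancel, and the remainder regroups as $\bLd(1-\bU_n\bOa)\bc(k)\rho_n(k)-\bU_n\bO(1-\bU_n\bOa)\bc(k)\rho_n(k)=\bLd\bu_n(k)-\bU_n\bO\bu_n(k)$.

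For \eqref{A:uDync}, starting from $\bu_{n+1}(k)=(1-\bU_{n+1}\bOa)\bLd\bc(k)\rho_n(k)$, I would push $\bLd$ through $\bOa$ via \eqref{A:OaDyn} and then substitute \eqref{A:UDync} rewritten as $\bU_{n+1}\tbLd=\bU_{n+1}\bO\bU_n-\bLd\bU_n$; after regrouping, the factor $(1-\bU_n\bOa)\bc(k)\rho_n(k)=\bu_n(k)$ appears naturally and the result collapses to $\bLd\bu_n(k)-\bU_{n+1}\bO\bu_n(k)$.

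For the $n_1$-shift \eqref{A:uDynb}, the same template applies but with the rational factor $\frac{p_1+\bLd}{p_1-\bLd}$ replacing $\bLd$ in the seed dynamics, and with \eqref{A:UDynb} supplying the rational-kernel analogue of the $\bU_n$-evolution. The crucial extra ingredient is the commutator identity obtained by sandwiching \eqref{A:OaDyn} between $(p_1+\tbLd)^{-1}$ and $(p_1-\bLd)^{-1}$, namely
\begin{align*}
\bOa\frac{1}{p_1-\bLd}=\frac{1}{p_1+\tbLd}\bOa+\frac{1}{p_1+\tbLd}\bO\frac{1}{p_1-\bLd},
\end{align*}
which lets me transport all $\bLd$'s from the right of $\bOa$ to the left and then apply \eqref{A:UDynb}, producing precisely the correction $-2p_1\wt\bU_n\frac{1}{p_1+\tbLd}\bO\frac{1}{p_1-\bLd}\bu_n(k)$ while the remaining terms recombine into $\frac{p_1+\bLd}{p_1-\bLd}\bu_n(k)$. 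This rational-kernel case is the main obstacle: one has to organise the chain of commutator identities carefully so that every residual $\bO$-term collapses into the single correction on the right-hand side of \eqref{A:uDynb}, rather than scattering into unrelated pieces.
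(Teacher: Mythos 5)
Your proposal is correct and follows essentially the same route as the paper's Appendix \ref{A:uDynPf}: differentiate/shift the representation \eqref{u}, use the seed dynamics of $\bc(k)\rho_n(k)$, transport $\bLd$ past $\bOa$ via \eqref{A:OaDyn} (your sandwiched identity $\bOa\frac{1}{p_1-\bLd}=\frac{1}{p_1+\tbLd}\bOa+\frac{1}{p_1+\tbLd}\bO\frac{1}{p_1-\bLd}$ is equivalent to the commutator identity the paper uses), and then substitute the corresponding relation from \eqref{A:UDyn}. All three cases check out as you describe.
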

\begin{proof}
See appendix \ref{A:uDynPf}.
\end{proof}
The equations listed in \eqref{A:uDyn} form an infinite vector system towards the linear problem for the semi-discrete equation \eqref{A:u}.
To precisely construct the associated linear problem in scalar form, we introduce a scalar wave function $\phi_n\doteq[\bu_n(k)]^{(0)}$.
We present the result as the theorem below.
\begin{theorem}
Suppose that $\bu_n(k)$ is an arbitrary solution to the linear integral equation \eqref{Integral} and $\bU_n$ is the corresponding potential matrix defined by \eqref{Potential}.
The scalar wave function $\phi_n=[\bu_n(k)]^{(0)}$ and the potential $u_n=\bU_n^{(0,0)}$ satisfy the linear system
\bse\label{A:Lax}
\begin{align}
&\partial_1\phi_n=\phi_{n+1}+(u_{n+1}-u_n)\phi_n, \label{A:Laxa} \\
&\wt{\phi}_{n+1}=\frac{2p_1+u_{n+1}-\wt u_{n+1}}{1+\exp\left[\partial_1^{-1}(\wt u_{n+1}-u_{n+1})-\partial_1^{-1}(\wt u_n-u_n)\right]}\wt{\phi}_n
-\phi_{n+1}-\frac{2p_1+u_{n+1}-\wt u_{n+1}}{1+\exp\left[\partial_1^{-1}(u_{n+1}-\wt u_{n+1})-\partial_1^{-1}(u_n-\wt u_n)\right]}\phi_n. \label{A:Laxb}
\end{align}
\ese
\end{theorem}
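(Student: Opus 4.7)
The plan is to project the infinite-vector dynamical system \eqref{A:uDyn} for $\bu_n(k)$ onto its $(0)$-th component and to eliminate the resulting higher-index objects using the companion algebraic relations \eqref{A:UDyn} together with the tau-function identities \eqref{A:tauDyn}. Three component identities do most of the work: $(\bO\bu_n(k))^{(0)}=\phi_n$, $[\bU_n\bO\bu_n(k)]^{(0)}=u_n\phi_n$ and $[\bLd\bu_n(k)]^{(0)}=[\bu_n(k)]^{(1)}$, all immediate from the definitions of $\bO$ and $\bLd$. Equation \eqref{A:Laxa} then falls out at once: take the $(0)$-component of \eqref{A:uDyna} to get $\partial_1\phi_n=[\bu_n(k)]^{(1)}-u_n\phi_n$, and remove $[\bu_n(k)]^{(1)}$ via the $(0)$-component of the $n$-shift relation \eqref{A:uDync}, which reads $\phi_{n+1}=[\bu_n(k)]^{(1)}-u_{n+1}\phi_n$. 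Substitution yields \eqref{A:Laxa}.

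For \eqref{A:Laxb}, applying the $x_1$-shift to \eqref{A:Laxa} gives $\wt\phi_{n+1}=\partial_1\wt\phi_n-(\wt u_{n+1}-\wt u_n)\wt\phi_n$, so the task reduces to expressing $\partial_1\wt\phi_n$ algebraically through $\phi_n,\phi_{n+1},\wt\phi_n$. The required link between $\wt\phi_n$ and $\phi_n,\phi_{n+1}$ comes from the scalar reduction of the tilde-shift dynamics \eqref{A:uDynb}. Multiplying \eqref{A:uDynb} by $(p_1-\bLd)$ from the left clears the formal series $1/(p_1-\bLd)$; extracting the $(0)$-component then couples $\wt\phi_n,\phi_n,\phi_{n+1}$ with the mixed scalar $\bigl(\wt\bU_n\frac{1}{p_1+\tbLd}\bO\frac{1}{p_1-\bLd}\bU_n\bigr)^{(0,0)}$. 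This mixed quantity is in turn controlled by the matrix identity \eqref{A:UDynb} and, after an $n$-shift, is expressed through the $S_n(\pm p_1,\pm p_1)$-type entries that govern the tau-function ratios of \eqref{A:tauDynb}.

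The main obstacle is the final identification of the exponential factor $\exp[\partial_1^{-1}(\wt u_{n+1}-u_{n+1})-\partial_1^{-1}(\wt u_n-u_n)]$ that appears in the coefficients of \eqref{A:Laxb}. Via \eqref{A:tauDyna} this factor equals the ratio $\wt\tau_{n+1}\tau_n/(\tau_{n+1}\wt\tau_n)$, and through \eqref{A:tauDync} this quotient matches precisely the one produced by the scalar projection of $\frac{1}{p_1+\tbLd}\bO\frac{1}{p_1-\bLd}$ described above. Combined with the coefficient sum $A+B=2p_1+u_{n+1}-\wt u_{n+1}$ for the two terms in \eqref{A:Laxb} --- which reflects the trace structure of the tilde-shift operator already visible in \eqref{A:UDynb} --- this bookkeeping closes the computation and recovers \eqref{A:Laxb} in the stated form.
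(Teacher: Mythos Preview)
Your derivation of \eqref{A:Laxa} is correct and is exactly the paper's argument: subtracting the $(0)$-component of \eqref{A:uDync} from that of \eqref{A:uDyna} eliminates $[\bu_n(k)]^{(1)}$ and yields the result.

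Your argument for \eqref{A:Laxb}, however, has a genuine gap. Multiplying \eqref{A:uDynb} on the left by $(p_1-\bLd)$ does \emph{not} clear the formal series: it only removes the denominator in the first term $\frac{p_1+\bLd}{p_1-\bLd}\bu_n(k)$, while in the second term the factor $\frac{1}{p_1-\bLd}$ sits to the right of $\bO$ and is unaffected by left multiplication. After projection you are still left with the auxiliary scalar $\bigl[\frac{1}{p_1-\bLd}\bu_n(k)\bigr]^{(0)}$, which is an infinite series in the components of $\bu_n(k)$ --- not the purely potential object $\bigl(\wt\bU_n\frac{1}{p_1+\tbLd}\bO\frac{1}{p_1-\bLd}\bU_n\bigr)^{(0,0)}$ that you write down. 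The latter quantity belongs to the matrix relation \eqref{A:UDynb}, not to the vector relation \eqref{A:uDynb}; the two have been conflated. Likewise, the route through $\partial_1\wt\phi_n$ (via the tilde of \eqref{A:Laxa}) leads back to $\wt\phi_{n+1}$ and is circular; it cannot by itself produce the derivative-free relation \eqref{A:Laxb}.

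What the paper does instead is to keep the auxiliary scalar $\chi_n\doteq\bigl[\frac{1}{p_1-\bLd}\bu_n(k)\bigr]^{(0)}$ in play and eliminate it between two independent relations. Taking the $(0)$-component of \eqref{A:uDynb} directly (without any premultiplication) gives
\[
\wt\phi_n+\phi_n=2p_1\,\wt V_n(p_1)\,\chi_n,
\]
while applying $\bigl[\frac{1}{p_1-\bLd}\,\cdot\,\bigr]^{(0)}$ to the $n$-shift \eqref{A:uDync} yields
\[
p_1\chi_n-\chi_{n+1}=W_{n+1}(-p_1)\,\phi_n.
\]
Eliminating $\chi_n$ and $\chi_{n+1}$ gives a linear relation among $\phi_n,\phi_{n+1},\wt\phi_n,\wt\phi_{n+1}$ with coefficients $\wt V_n(p_1)$, $\wt V_{n+1}(p_1)$, $W_{n+1}(-p_1)$. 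These coefficients are then converted into the stated expressions using the specialisations $V_n(-p_1)/\wt V_n(p_1)=\wt\tau_n/\tau_n$ of \eqref{n1:V}--\eqref{n1:W}, together with \eqref{A:tauVWa} and \eqref{n1:u}, and finally $u_n=\partial_1\ln\tau_n$. The key idea your proposal is missing is precisely this: one must pair the tilde-dynamics \eqref{A:uDynb} with the $\frac{1}{p_1-\bLd}$-projected $n$-dynamics \eqref{A:uDync} so that the nonlocal quantity $\chi_n$ cancels, rather than trying to remove it by algebraic clearing of denominators.
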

\begin{proof}
See appendix \ref{A:LaxPf}.
\end{proof}
\begin{remark}
The linear equation \eqref{A:Laxa} is exactly the same as the continuous part of the Lax pair of \eqref{DDCKP} by taking lattice parameter zero, see e.g. \cite{NP94},
which is also an alternative representation of the spectral problem for the \ac{2D} Toda system, cf. \cite{FG80,FG83}.
Equation \eqref{A:Laxb} is, however, a novel one that has not yet appeared in the literature, as far as we know.
\end{remark}
The linear system \eqref{A:Lax} forms a Lax pair for the semi-discrete equation \eqref{A:u},
as the compatibility condition
\begin{align*}
\partial_1(\wt\phi_{n+1})=(\partial_1\phi)_{n+1}^{\wt{\,}}
\end{align*}
gives rise to the semi-discrete equation \eqref{A:u}.
The linear system \eqref{A:Lax} also serves as the Lax pair for equations \eqref{A:tau} or \eqref{A:v},
once the potential $u_n$ is replaced by $\tau_n$ or $v_n$.
To construct the adjoint Lax pair of equation \eqref{A:u}, we need to focus on the wave function $\tbv_n(k')$.
The main results are presented in the proposition and theorem below.
\begin{proposition}
The wave function $\tbv_n(k')$ satisfies dynamical evolutions as follows:
\bse\label{A:vDyn}
\begin{align}
&\partial_1\tbv_n({k'})=\tbv_n({k'})\tbLd-\tbv_n({k'})\bO\bU_n, \label{A:vDyna}\\
&\ut\tbv_n(k')=\tbv_n(k')\frac{p_1-\tbLd}{p_1+\tbLd}+2p_1\tbv_n(k')\frac{1}{p_1+\tbLd}\bO\frac{1}{p_1-\bLd}\ut\bU_n, \label{A:vDynb}\\
&\tbv_{n-1}(k')=-\tbv_n(k')\tbLd+\tbv_n(k')\bO\bU_{n-1}. \label{A:vDync}
\end{align}
\ese
\end{proposition}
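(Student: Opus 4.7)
The plan is to parallel the derivation of the $\bu_n(k)$-dynamics in \eqref{A:uDyn}, starting from the closed-form representation \eqref{v}, namely $\tbv_n(k')=\sigma_n(k')\tbc(k')(1-\bOa\bU_n)$, and exploiting the $\bU_n$-dynamics \eqref{A:UDyn} together with the commutation identity \eqref{A:OaDyn}, i.e.\ $\tbLd\bOa=\bO-\bOa\bLd$. For each of the three parts the strategy is the same: differentiate or shift the scalar prefactor $\sigma_n(k')\tbc(k')$ (which is easy since $\tbc(k')$ is a simultaneous eigenvector of $\tbLd$), substitute the corresponding dynamical equation for $\bU_n$, and then use \eqref{A:OaDyn} to transport $\tbLd$-factors across $\bOa$ until everything refactors as $\tbv_n(k')$ times the desired right-hand side.

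For \eqref{A:vDyna}, I would differentiate \eqref{v} with respect to $x_1$, use $\partial_1\sigma_n(k')=k'\sigma_n(k')$, and absorb the scalar $k'$ as $\tbc(k')\tbLd$ on the right. Substituting \eqref{A:UDyna} for $\partial_1\bU_n$ and invoking \eqref{A:OaDyn} to replace $\tbLd\bOa$ by $\bO-\bOa\bLd$, the two copies of $\sigma_n(k')\tbc(k')\bOa\bLd\bU_n$ cancel, and the rest factors as $\sigma_n(k')\tbc(k')(1-\bOa\bU_n)(\tbLd-\bO\bU_n)=\tbv_n(k')\tbLd-\tbv_n(k')\bO\bU_n$. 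For \eqref{A:vDync}, the plane-wave identity $\sigma_{n-1}(k')=-k'\sigma_n(k')$ gives $\sigma_{n-1}(k')\tbc(k')=-\sigma_n(k')\tbc(k')\tbLd$, so
\[
\tbv_{n-1}(k')=-\sigma_n(k')\tbc(k')\tbLd(1-\bOa\bU_{n-1}).
\]
Replacing $\tbLd\bOa$ by $\bO-\bOa\bLd$ and then substituting the $n\to n-1$ form of \eqref{A:UDync}, $\bLd\bU_{n-1}=-\bU_n\tbLd+\bU_n\bO\bU_{n-1}$, the expression collects into $\sigma_n(k')\tbc(k')(1-\bOa\bU_n)(-\tbLd+\bO\bU_{n-1})=-\tbv_n(k')\tbLd+\tbv_n(k')\bO\bU_{n-1}$.

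For \eqref{A:vDynb}, the shift $\ut\sigma_n(k')=\sigma_n(k')(p_1-k')/(p_1+k')$ gives
\[
\ut\tbv_n(k')=\sigma_n(k')\tbc(k')\frac{p_1-\tbLd}{p_1+\tbLd}(1-\bOa\ut\bU_n).
\]
The key auxiliary identity, a direct consequence of \eqref{A:OaDyn} upon clearing denominators (both sides reduce to $(p_1+\tbLd)\bOa(p_1+\bLd)-(p_1-\tbLd)\bOa(p_1-\bLd)=2p_1\bO$), is
\[
\frac{p_1-\tbLd}{p_1+\tbLd}\bOa=\bOa\frac{p_1+\bLd}{p_1-\bLd}-2p_1\frac{1}{p_1+\tbLd}\bO\frac{1}{p_1-\bLd}.
\]
Using this to commute the fractional operator past $\bOa$, and then invoking the $\ut$-shifted version of \eqref{A:UDynb},
\[
\bU_n\frac{p_1-\tbLd}{p_1+\tbLd}=\frac{p_1+\bLd}{p_1-\bLd}\ut\bU_n-2p_1\bU_n\frac{1}{p_1+\tbLd}\bO\frac{1}{p_1-\bLd}\ut\bU_n,
\]
to replace the $\bOa\,(\text{fractional})\,\ut\bU_n$ block by a $\bOa\bU_n$-block plus residue corrections, the spurious quadratic pieces in $\bOa\bU_n\bO(\,\cdot\,)\ut\bU_n$ cancel, and the survivors assemble exactly into $\tbv_n(k')\frac{p_1-\tbLd}{p_1+\tbLd}+2p_1\tbv_n(k')\frac{1}{p_1+\tbLd}\bO\frac{1}{p_1-\bLd}\ut\bU_n$.

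The main obstacle is the bookkeeping in \eqref{A:vDynb}: the fractional operators $(p_1\pm\bLd)^{-1}$ and $(p_1\pm\tbLd)^{-1}$ must be handled as formal geometric series in $\bLd$ and $\tbLd$, and one must verify that the residue correction $-2p_1(p_1+\tbLd)^{-1}\bO(p_1-\bLd)^{-1}$ produced by commutation through $\bOa$ combines coherently with the quadratic $\bU_n\,\bO\,\ut\bU_n$-contribution arising from the $\ut$-analogue of \eqref{A:UDynb} to reproduce both the correct prefactor $2p_1$ and the $(p_1-\bLd)^{-1}\ut\bU_n$ structure on the right-hand side; parts \eqref{A:vDyna} and \eqref{A:vDync} are then essentially routine reshuffles once the $\partial_1$- and $n$-shift analogs of this principle have been set up.
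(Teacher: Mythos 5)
Your proposal is correct and follows exactly the route the paper uses for the mirror statement \eqref{A:uDyn} (appendix \ref{A:uDynPf}): shift or differentiate \eqref{v}, insert the corresponding relation from \eqref{A:UDyn}, and transport $\tbLd$ across $\bOa$ via \eqref{A:OaDyn}; your auxiliary identity $\frac{p_1-\tbLd}{p_1+\tbLd}\bOa=\bOa\frac{p_1+\bLd}{p_1-\bLd}-2p_1\frac{1}{p_1+\tbLd}\bO\frac{1}{p_1-\bLd}$ is precisely the one derived in appendix \ref{A:UDynPf}, and you correctly use $p_1+\tbLd$ in the denominator of \eqref{A:UDynb} where the paper's displayed equation has a typo. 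The only quibble is the phrase ``the spurious quadratic pieces cancel'' in the \eqref{A:vDynb} step: the term $-2p_1\bOa\bU_n\frac{1}{p_1+\tbLd}\bO\frac{1}{p_1-\bLd}\ut\bU_n$ does not cancel but combines with $+2p_1\frac{1}{p_1+\tbLd}\bO\frac{1}{p_1-\bLd}\ut\bU_n$ to produce the factor $(1-\bOa\bU_n)$ needed to reassemble $\tbv_n(k')$, which is what your final displayed answer in fact reflects.
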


\begin{theorem}
For an arbitrary solution $\tbv_n(k')$ to the linear integral equation \eqref{AdIntegral} and associated $\bU_n$ given by \eqref{AdPotential},
the adjoint scalar wave function $\psi_n=[\tbv_n(k')]^{(0)}$ and the nonlinear potential $u_n=\bU_n^{(0,0)}$ satisfy the linear system
\bse\label{A:AdLax}
\begin{align}
&\partial_1\psi_n=-\psi_{n-1}+(u_{n-1}-u_n)\psi_n, \label{A:AdLaxa} \\
&\ut{\psi}_{n-1}=\frac{2p_1+\ut{u}_{n-1}-u_{n-1}}{1+\exp\left[\partial_1^{-1}(\ut{u}_{n-1}-u_{n-1})-\partial_1^{-1}(\ut{u}_n-u_n)\right]}\ut{\psi}_n
-\psi_{n-1}-\frac{2p_1+\ut{u}_{n-1}-u_{n-1}}{1+\exp\left[\partial_1^{-1}(u_{n-1}-\ut{u}_{n-1})-\partial_1^{-1}(u_n-\ut{u}_n)\right]}\psi_n. \label{A:AdLaxb}
\end{align}
\ese
\end{theorem}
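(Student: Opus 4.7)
The plan is to mirror the proof of the forward Lax pair \eqref{A:Lax}, since Proposition \eqref{A:vDyn} provides dynamical relations for $\tbv_n(k')$ that are formally dual to the ones for $\bu_n(k)$ in \eqref{A:uDyn}. First I extract the zeroth component of each infinite-vector equation to obtain scalar relations among $\psi_n$, $\ut\psi_n$, $\psi_{n-1}$ and certain auxiliary quantities; then I combine these with the tau-function identities \eqref{A:tauDyn} to translate fractional-in-$\bLd,\tbLd$ expressions into the $\partial_1^{-1}$ exponential form appearing on the right-hand side of \eqref{A:AdLaxb}.

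For the continuous equation \eqref{A:AdLaxa}, I take the zeroth component of \eqref{A:vDyna}: using $[\tbv_n(k')\tbLd]^{(0)} = [\tbv_n(k')]^{(1)}$ and $[\tbv_n(k')\bO\bU_n]^{(0)} = \psi_n u_n$, this yields $\partial_1 \psi_n = [\tbv_n(k')]^{(1)} - u_n \psi_n$. Applying the same extraction to the backward shift \eqref{A:vDync} gives $\psi_{n-1} = -[\tbv_n(k')]^{(1)} + u_{n-1} \psi_n$. Eliminating the auxiliary quantity $[\tbv_n(k')]^{(1)}$ between these two relations produces \eqref{A:AdLaxa} immediately.

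The discrete equation \eqref{A:AdLaxb} is considerably more delicate. Starting from \eqref{A:vDynb} written in the form $\ut\tbv_n(k') + \tbv_n(k') = 2p_1 \tbv_n(k')(p_1+\tbLd)^{-1}\bigl[1 + \bO(p_1-\bLd)^{-1}\ut\bU_n\bigr]$, I would apply \eqref{A:vDync} at the $\ut$-level to eliminate the infinite-vector $\ut\tbv_{n-1}(k')$ in favour of $\ut\tbv_n(k')\tbLd$ and $\ut\tbv_n(k')\bO\ut\bU_{n-1}$, and combine the two identities to obtain a single infinite-vector relation linking $\ut\tbv_{n-1}(k')$, $\ut\tbv_n(k')$, $\tbv_{n-1}(k')$ and $\tbv_n(k')$. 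Taking the zeroth component then yields a scalar equation for $\ut\psi_{n-1}$, $\ut\psi_n$, $\psi_{n-1}$ and $\psi_n$ whose coefficients are combinations of the quantities $V_n(\pm p_1)$, $W_n(\pm p_1)$ and $S_n(\pm p_1,\pm p_1)$ built from $\bU_n$ and its backward shift $\ut\bU_n$.

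The main obstacle is the final translation of these auxiliary entries into the exponential form on the right-hand side of \eqref{A:AdLaxb}. Here the tau-function dictionary \eqref{A:tauDyn} is essential: the identity $\ut\tau_n/\tau_n = 1 - 2p_1 S_n(p_1,p_1)$, combined with $\partial_1\ln\tau_n = u_n$, expresses the $S_n$-type ratios as $\exp\bigl[\partial_1^{-1}(u_n - \ut u_n)\bigr]$, while $\tau_{n-1}/\tau_n = w_n$ together with $v_{n-1}w_n = 1$ handles the $n$-shifts. Provided the substitutions are made in the right order, the coefficient of $\ut\psi_n$ should reduce to the first fraction in \eqref{A:AdLaxb}, with numerator $2p_1 + \ut u_{n-1} - u_{n-1}$ coming from \eqref{A:tauDynb}, and the coefficient of $\psi_n$ to the second fraction; the only bookkeeping concern is to ensure that the denominators come out in the symmetric form $1+\exp[\cdots]$ required, rather than in an equivalent but inverted one.
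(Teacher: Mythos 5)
Your strategy is the right one: the paper itself prints no proof of this theorem, but the intended argument is exactly the adjoint mirror of the proof of \eqref{A:Lax} in appendix \ref{A:LaxPf}, and that is what you carry out. Your derivation of \eqref{A:AdLaxa} is complete and correct. For \eqref{A:AdLaxb} your outline has the right ingredients but stops short at precisely the step that needs checking; it does close, as follows. The zeroth component of your rewritten \eqref{A:vDynb} gives $\ut\psi_n+\psi_n=2p_1\,\ut W_n(-p_1)\,\chi_n$ with $\chi_n\doteq[\tbv_n(k')\frac{1}{p_1+\tbLd}]^{(0)}$, while the zeroth component of $\eqref{A:vDync}\cdot\frac{1}{p_1+\tbLd}$ gives $\chi_{n-1}=p_1\chi_n-V_{n-1}(p_1)\psi_n$; eliminating $\chi$ yields $\ut\psi_{n-1}=-\psi_{n-1}+\frac{p_1\ut W_{n-1}(-p_1)}{\ut W_n(-p_1)}(\ut\psi_n+\psi_n)-2p_1\ut W_{n-1}(-p_1)V_{n-1}(p_1)\psi_n$. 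The dictionary entries that finish the job are the downshift of \eqref{n1:u}, namely $2p_1\ut W_{n-1}(-p_1)V_{n-1}(p_1)=2p_1+\ut u_{n-1}-u_{n-1}$, together with $V_{n-1}(p_1)=\ut V_{n-1}(-p_1)\,\ut\tau_{n-1}/\tau_{n-1}$ (from \eqref{n1:V}) and the downshifted second identity of \eqref{A:tauVWa}, which combine to give $2\ut W_n(-p_1)V_{n-1}(p_1)=1+\frac{\tau_n\ut\tau_{n-1}}{\ut\tau_n\tau_{n-1}}=1+\exp[\partial_1^{-1}(\ut u_{n-1}-u_{n-1})-\partial_1^{-1}(\ut u_n-u_n)]$; the coefficient of $\psi_n$ then reduces to the second fraction of \eqref{A:AdLaxb} via $\frac{1}{1+e^A}-1=-\frac{1}{1+e^{-A}}$. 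One small sign slip to watch: $\ut\tau_n/\tau_n=\exp[\partial_1^{-1}(\ut u_n-u_n)]$, not $\exp[\partial_1^{-1}(u_n-\ut u_n)]$, and it is the $V$/$W$ identities \eqref{n1:V}--\eqref{n1:W} and \eqref{A:tauVWa}, rather than $\tau_{n-1}/\tau_n=w_n$, that handle the $n$-shifts.
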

The linear system \eqref{A:AdLax} forms the adjoint Lax pair of the semi-discrete equation \eqref{A:u}.
We note that such a linear problem is not gauge equivalent to \eqref{A:Lax}.

\subsection{Continuum limit}

We present two different continuum limit schemes to show that the proposed new semi-discrete \eqref{A:u} is the integrable semi-discretisation of both the differential-difference \ac{KP} equation and the \ac{2D} Toda equation.

We first let $n_1\rightarrow\infty$ and $p_1\rightarrow\infty$ and introduce the change of variables $(x_1,n_1,n)\rightarrow(x_1,x_3,n)$ given by
\begin{align}\label{Lim1}
x_1+\frac{2n_1}{p_1}\doteq x_1, \quad \frac{2n_1}{3p_1^3}\doteq x_3 \quad \hbox{and} \quad n \doteq n.
\end{align}
Then by series expansion of \eqref{A:tau}, we obtain a \ac{DDeltaE}
\begin{align}\label{A:tauLim1}
\partial_3\ln\frac{\tau_{n+1}\tau_{n-1}}{\tau_{n}^2}=\partial_1^3\ln\left(\tau_{n+1}\tau_{n}\tau_{n-1}\right)
+3\left(\partial_1^2\ln\tau_{n+1}\right)\left(\partial_1\ln\frac{\tau_{n+1}}{\tau_{n}}\right)
-3\left(\partial_1^2\ln\tau_{n-1}\right)\left(\partial_1\ln\frac{\tau_{n}}{\tau_{n-1}}\right)
+\left(\partial_1\ln\frac{\tau_{n+1}}{\tau_{n}}\right)^3
-\left(\partial_1\ln\frac{\tau_{n}}{\tau_{n-1}}\right)^3
\end{align}
arising as the coefficient of the lowest order term, i.e. the term of $O(p_1^{-2})$.
Equation \eqref{A:tauLim1} is equivalent to the coupled system of bilinear equations (see the appendix of \cite{JM83}) as follows:
\begin{align*}
(\rD_1^2+\rD_2)\tau_n\cdot\tau_{n+1}=0, \quad (\rD_1^3-4\rD_3-3\rD_1\rD_2)\tau_n\cdot\tau_{n+1}=0,
\end{align*}
by eliminating the derivative with respect to $x_2$, where $\rD_j$ stands for Hirota's bilinear operator defined as
\begin{align*}
\rD_j f\cdot g=\left.\left(\frac{\partial}{\partial x_j}-\frac{\partial}{\partial x'_j}\right)\right|_{x'_j=x_j}f(\cdots,x_j,\cdots)g(\cdots,x'_j,\cdots)
\end{align*}
for arbitrary differentiable functions $f=f(x_1,x_2,\cdots)$ and $g=g(x_1,x_2,\cdots)$.
Similarly, we can also perform \eqref{Lim1} on the nonlinear equation \eqref{A:u} and evaluate its continuum limit. As a result, we obtain a \ac{DDeltaE} in the form of
\begin{align}\label{A:uLim1}
\partial_3(u_{n+1}-2u_n+u_{n-1})=\partial_1^3(u_{n+1}+u_n+u_{n-1})
+\partial_1\left[3(u_{n+1}-u_n)\partial_1u_{n+1}-3(u_n-u_{n-1})\partial_1u_{n-1}+(u_{n+1}-u_n)^3-(u_n-u_{n-1})^3\right].
\end{align}
We remark that equation \eqref{A:uLim1} is the potential form of the third-order differential-difference \ac{KP} equation introduced in \cite{Fu13}.

Alternatively, we let $n_1\rightarrow\infty$ and $p_1\rightarrow 0$ and introduce new coordinates $(x_1,x_{-1},n)$ which are connected with the old ones $(x_1,n_1,n)$ through
\begin{align}\label{Lim2}
x_1\doteq x_1, \quad 2n_1p_1\doteq x_{-1} \quad \hbox{and} \quad n\doteq n.
\end{align}
Expanding equation \eqref{A:tau} into a series in terms of the positive powers of $p_1$, we obtain from the leading term the following equation:
\begin{align}\label{A:tauLim2}
\frac{1}{2}\rD_1\rD_{-1}\tau_n\cdot\tau_n=\tau_n^2-\tau_{n+1}\tau_{n-1},
\end{align}
namely the bilinear \ac{2D} Toda equation, see e.g. \cite{UT84,Hir04}.
The change of coordinates \eqref{Lim2} also brings us another \ac{DDeltaE}
\begin{align}\label{A:uLim2}
\partial_1\ln(1-\partial_{-1}u_n)=u_{n+1}-2u_n+u_{n-1}
\end{align}
as the continuum limit of \eqref{A:u}. Equation \eqref{A:uLim2} is one of the nonlinear forms of the \ac{2D} Toda equation (see e.g. \cite{Fu18a}),
gauge equivalent to the well-known form
\begin{align*}
\partial_1\partial_{-1}\varphi_n=\re^{\varphi_n-\varphi_{n-1}}-\re^{\varphi_{n+1}-\varphi_n},
\end{align*}
given by Mikhailov \cite{Mik79}, through the Miura-type transformation $u_{n+1}-u_n=\partial_1\varphi_n$.

\section{Semi-discrete Drinfel'd--Sokolov hierarchies}\label{S:DS}

\subsection{General reduction formulae}

The semi-discrete equation \eqref{A:u} is the one associated with the algebra $A_\infty$, as the most general model in this paper.
In this section, we construct the semi-discrete equations associated with infinite-dimensional algebras $B_\infty$ and $C_\infty$ and also Kac--Moody algebras $A_r^{(1)}$, $A_{2r}^{(2)}$, $C_r^{(1)}$ and $D_{r+1}^{(2)}$ in the Drinfel'd--Sokolov classification, from the semi-discrete equation \eqref{A:u}.
This is realised by imposing restrictions on the integration measure $\rd\zeta(k,k')$ and the integration domain $D$ (see\cite{Fu21b}),
which leads to symmetry and periodicity constraints on the discrete independent variable $n$.
As a consequence, the independent variable $n$ turns out to be an index, labelling multi-component variables for the reduced integrable equations.

We start with the reduction to $B_\infty$, by taking a symmetric domain $D$ and a special measure $\rd\zeta(k,k')$ satisfying
\begin{align}\label{B:Measure}
\rd\zeta(k,k')=\rd\zeta'(k,k')k, \quad \hbox{in which} \quad \rd\zeta'(k,k')=-\rd\zeta'(k',k), \quad \forall (k,k')\in D.
\end{align}
Notice that the infinite matrix $C_n$ defined as \eqref{C} with the plane wave factors \eqref{A:PWF} reads
\begin{align*}
\bC_n=\iint_D\rd\zeta'(k,k')k\bc(k)\tbc(k')\re^{(k+k')x_{1}}\left(\frac{p_1+k}{p_1-k}\frac{p_1+k'}{p_1-k'}\right)^{n_1}\left(-\frac{k}{k'} \right)^n
\end{align*}
in the $A_\infty$ case.
It is direct to verify that under the special constraint \eqref{B:Measure} we have
\begin{align*}
\tbC_n&=\iint_D\rd\zeta'(k,k')k\bc(k')\tbc(k)\re^{(k+k')x_{1}}\left(\frac{p_1+k}{p_1-k}\frac{p_1+k'}{p_1-k'}\right)^{n_1}\left(-\frac{k}{k'} \right)^n \\
&=\iint_D\rd\zeta'(k',k)k'\bc(k')\tbc(k)\re^{(k'+k)x_{1}}\left(\frac{p_1+k'}{p_1-k'}\frac{p_1+k}{p_1-k}\right)^{n_1}\left(-\frac{k'}{k}\right)^{-1-n}=\bC_{-1-n}.
\end{align*}
This together with the symmetry $\tbOa=\bOa$ results in the same reduction on $\bU_n$ and $\tau_n$, by doing the same analysis on \eqref{U} and \eqref{tau}.
In other words, we have the symmetry reductions
\begin{align}\label{B:Reduc}
\tau_{-1-n}=\tau_n \quad \hbox{and} \quad u_{-1-n}=u_n
\end{align}
in the $B_\infty$ case.
Similarly, we impose the restriction that $D$ and $\rd\zeta(k,k')$ are both symmetric for the $C_\infty$ class, namely
\begin{align}\label{C:Measure}
\rd\zeta(k,k')=\rd\zeta(k',k), \quad \forall (k,k')\in D.
\end{align}
Thus we accordingly obtain the reduction conditions
\begin{align}\label{C:Reduc}
\tau_{-n}=\tau_{n} \quad \hbox{and} \quad u_{-n}=u_n,
\end{align}
which help to reduce \eqref{A:tau} and \eqref{A:u} to the $C_\infty$-type equations, respectively.
We note that in these two classes there is no reduction on the wave functions $\phi_n$ and $\psi_n$.
Hereby, the corresponding Lax pairs share the same form of \eqref{A:Lax} and \eqref{A:AdLax}, subject to \eqref{B:Reduc} and \eqref{C:Reduc}, respectively.

For the $A_r^{(1)}$ class, we take a special measure of the form
\begin{align}\label{Ar1:Measure}
\rd\zeta(k,k')=\sum_{j=1}^{\varphiup(\cN)}\frac{1}{2\pi\ri}\frac{\rd\lambda_j(k)\rd k'}{k'+\oa_\cN^{(j)}k},
\end{align}
in which $\oa_{\cN}^{(j)}$ for $j=1,2,\cdots,\varphiup(\cN)$, with $\varphiup(\cdot)$ being Euler's totient function, denote all the primitive $\cN$th roots of unity, and $\rd\lambda_j(k)$ are arbitrary measures.
Such a measure deep down indicates that a constraint $k^\cN=(-k')^\cN$ is imposed on the spectral parameters $k$ and $k'$.
This results in the periodicity conditions
\begin{align}\label{Period}
\bC_{n+\cN}=\bC_n \quad \hbox{and} \quad \bU_{n+\cN}=\bU_n,
\end{align}
according to \eqref{C} and \eqref{U}.
Following the definitions of $u_n$ and $\tau_n$ as well as \eqref{u} and \eqref{v}, we can induce the $A_r^{(1)}$ reduction by setting $\cN=r+1$, which is composed of
\begin{align}\label{Ar1:Reduc}
\tau_{n+r+1}=\tau_n, \quad u_{n+r+1}=u_n, \quad \phi_{n+r+1}=k^{r+1}\phi_n \quad \hbox{and} \quad \psi_{n+r+1}=k^{-(r+1)}\psi_n.
\end{align}
The formula \eqref{Ar1:Reduc} reduces the semi-discrete equation \eqref{A:u} to the $(1+1)$-dimensional differential-difference system with respect to $n_1$ and $x_1$.
However, the obtained integrable system is different from that discussed in \cite{Fu18a}.
This is because here the discrete dispersion relation described by $n_1$ relies on the spectral parameter in a fractionally linear way according to \eqref{A:PWF}.
In other words, we present a different integrable semi-discretisation of the \ac{GD} hierarchy,
which, of course, takes a more complex form compared with the existing result.

Next, we take a special measure
\begin{align}\label{BN:Measure}
\rd\zeta(k,k')=k\sum_{j=1}^{\varphiup(\cN)}\frac{1}{2\pi\ri}\left(\frac{\rd\lambda_j(k)\rd k'}{k'+\oa_{\cN}^{(j)}k}-\frac{\rd k\rd\lambda_j(k')}{k+\oa_{\cN}^{(j)}k'}\right).
\end{align}
When $\cN=2r+1$, this leads to the $A_{2r}^{(2)}$ reduction, composed of
\begin{align}\label{A2r2:Reduc}
\tau_{n+2r+1}=\tau_n, \quad \tau_{-1-n}=\tau_n, \quad u_{n+2r+1}=u_n, \quad u_{-1-n}=u_n, \quad
\phi_{n+2r+1}=k^{2r+1}\phi_n \quad \hbox{and} \quad \psi_{n+2r+1}=k^{-(2r+1)}\psi_n;
\end{align}
while for $\cN=2r+2$, we obtain the reduction to $D_{r+1}^{(2)}$, namely
\begin{align}\label{Dr2:Reduc}
\tau_{n+2r+2}=\tau_n, \quad \tau_{-1-n}=\tau_n, \quad u_{n+2r+2}=u_n, \quad u_{-1-n}=u_n, \quad
\phi_{n+2r+2}=k^{2r+2}\phi_n \quad \hbox{and} \quad \psi_{n+2r+2}=k^{-(2r+2)}\psi_n.
\end{align}

Finally, we consider a special measure in the form of
\begin{align}\label{CN:Measure}
\rd\zeta(k,k')=\sum_{j=1}^{\varphiup(\cN)}\frac{1}{2\pi\ri}\left(\frac{\rd\lambda_j(k)\rd k'}{k'+\oa_{\cN}^{(j)}k}+\frac{\rd k\,\rd\lambda_j(k')}{k+\oa_{\cN}^{(j)}k'}\right).
\end{align}
In the case of $\cN=2r$, we derive the constraints
\begin{align}\label{Cr1:Reduc}
\tau_{n+2r}=\tau_n, \quad \tau_{-n}=\tau_n, \quad u_{n+2r}=u_n, \quad u_{-n}=u_n, \quad
\phi_{n+2r}=k^{2r}\phi_n \quad \hbox{and} \quad \psi_{n+2r}=k^{-2r}\psi_n,
\end{align}
which is the $C_r^{(1)}$ reduction.
The $\cN=2r+1$ case gives rise to
\begin{align*}
\tau_{n+2r+1}=\tau_n, \quad \tau_{-n}=\tau_n, \quad u_{n+2r+1}=u_n, \quad u_{-n}=u_n, \quad
\phi_{n+2r+1}=k^{2r+1}\phi_n \quad \hbox{and} \quad \psi_{n+2r+1}=k^{-(2r+1)}\psi_n,
\end{align*}
which is equivalent to the $A_{2r}^{(2)}$ reduction.

In addition to the reduction conditions associated with $A_r^{(1)}$, $A_{2r}^{(2)}$, $C_r^{(1)}$ and $D_{r+1}^{(2)}$,
there also exist two constraints that are applicable to each class of the reduced semi-discrete equations due to the periodicity, given by
\begin{align}\label{Constraint}
\partial_1\sum_{n=0}^{\cN-1}u_n=\sum_{n=0}^{\cN-1}(u_{n+1}-u_n)u_n \quad \hbox{and} \quad
\prod_{n=0}^{\cN-1}\frac{2p_1+u_{n+1}-\wt u_{n+1}}{1+\exp\left[\partial_1^{-1}(u_{n+1}-\wt u_{n+1})-\partial_1^{-1}(u_n-\wt u_n)\right]}=p_1^\cN,
\end{align}
where $u_{n+\cN}=u_n$. The first one is nothing but the $(0,0)$-entry of the identity
\begin{align*}
\partial_1\sum_{n=0}^{\cN-1}\bU_n=\sum_{n=0}^{\cN-1}(\bLd\bU_n+\bU_n\tbLd-\bU_n\bO\bU_n)
=\sum_{n=0}^{\cN-1}(\bLd\bU_n+\bU_{n+1}\tbLd-\bU_n\bO\bU_n)=\sum_{n=0}^{\cN-1}(\bU_{n+1}\bO\bU_n-\bU_n\bO\bU_n),
\end{align*}
which follows from \eqref{A:UDyna}, \eqref{A:UDync} and \eqref{Period}.
The second one is a direct consequence of
\begin{align*}
\prod_{n=0}^{\cN-1}\frac{V_{n+1}(-p_1)}{V_n(-p_1)}
=\prod_{n=0}^{\cN-1}\frac{\displaystyle 2p_1+\partial_1\ln\frac{\tau_{n+1}}{\wt\tau_{n+1}}}{\displaystyle p_1\left(1+\frac{\wt{\tau}_n\tau_{n+1}}{\tau_n\wt{\tau}_{n+1}}\right)}=1
\end{align*}
due to \eqref{A:tauVWb} and \eqref{Period} as well as \eqref{A:tauDyna}.

The reduced semi-discrete equations are obtained by imposing their respective reduction conditions as well as the constraints \eqref{Constraint} on the corresponding objects in the $A_\infty$ case.

\subsection{Examples}

In this subsection, we list the semi-discrete equations associated with the Kac--Moody algebras $A_1^{(1)}$, $A_2^{(2)}$, $C_2^{(1)}$ and $D_3^{(2)}$.
Each equation is integrable in the sense of having a Lax pair taking the form of
\bse\label{Lax}
\begin{align}
&\partial_1\Phi=\bP\Phi, \label{Laxa} \\
&\bQ_1\wt\Phi=\bQ_2\Phi, \label{Laxb}
\end{align}
\ese
where $\Phi={}^{t\!}(\phi_0,\phi_1,\cdots,\phi_{\cN-1})$, and $\bP$, $\bQ_1$ and $\bQ_2$, as we shall see below, are $\mathbb{Z}_\cN$ graded matrices.
The linear equation \eqref{Laxb} is a new discretisation of the Lax scheme discussed in \cite{FG80,FG83}.
This is because the effective Lax matrix in \eqref{Laxb}, i.e. $\bQ_1^{-1}\bQ_2$, takes the form of a fraction of $\mathbb{Z}_\cN$ graded matrices,
which differs from the one for the discrete Bogoyavlensky-type equations, cf. \cite{PN96,FX17a}.
We also note that on the \ac{2D} level the adjoint Lax pair is gauge equivalent to the standard one, in contrast to the \ac{3D} case.
For this reason, we do not list the adjoint ones in this subsection.
In addition, we adopt a new notation
\begin{align*}
e_{m,n}\doteq 1+\exp\left[\partial_1^{-1}(u_m-\wt u_m)-\partial_1^{-1}(u_n-\wt u_n)\right]
\end{align*}
for examples below, in order to express our results more compactly.

\begin{example}
The $A_1^{(1)}$-type equation is a two-component system composed of
\bse\label{A11:u}
\begin{align}
\partial_1\ln(2p_1+u_0-\wt u_0)&=2u_1-2u_0+(2p_1+u_0-\wt u_0)-\frac{2(2p_1+u_1-\wt u_1)}{e_{1,0}}, \\
\partial_1\ln(2p_1+u_1-\wt u_1)&=2u_0-2u_1+(2p_1+u_1-\wt u_1)-\frac{2(2p_1+u_0-\wt u_0)}{e_{0,1}}.
\end{align}
\ese
The Lax matrices $\bP$, $\bQ_1$ and $\bQ_2$ are given as follows:
\begin{align}\label{A11:Lax}
\bP=
\begin{pmatrix}
u_1-u_0 & 1 \\
k^2 & u_0-u_1
\end{pmatrix}, \quad
\bQ_1=
\begin{pmatrix}
\displaystyle\frac{2p_1+u_1-\wt u_1}{e_{0,1}} & -1  \\
-k^2 & \displaystyle\frac{2p_1+u_0-\wt u_0}{e_{1,0}}
\end{pmatrix}, \quad
\bQ_2=
\begin{pmatrix}
\displaystyle\frac{2p_1+u_1-\wt u_1}{e_{1,0}} & 1  \\
k^2 & \displaystyle\frac{2p_1+u_0-\wt u_0}{e_{0,1}}
\end{pmatrix}.
\end{align}
In this class, the potentials satisfy additional constraints
\begin{align}\label{A11:Constraint}
\partial_1(u_0+u_1)=-(u_0-u_1)^2 \quad \hbox{and} \quad
\frac{(2p_1+u_1-\wt u_1)(2p_1+u_0-\wt u_0)}{e_{0,1}e_{1,0}}=p_1^2.
\end{align}
The system \eqref{A11:u} is a new semi-discretisation of the potential \ac{KdV} equation.
Compared with the well-known differential-difference \ac{KdV} equation (see e.g. \cite{HJN16})
\begin{align*}
\partial_1(\wt u+u)=-(\wt u-u)^2+2p_1(\wt u-u),
\end{align*}
equation \eqref{A11:u} has a different $n_1$-part in the Lax pair, cf. \cite{FX17a,Fu18a}.
This is mainly caused by the discrete factor in terms of $n_1$ in the plane wave factor \eqref{A:PWF}.
\end{example}

\begin{example}
The $A_2^{(2)}$-type equation is also a two-component system in the form of
\bse\label{A22:u}
\begin{align}
2\partial_1\ln(2p_1+u_0-\wt u_0)&=2u_1-2u_0+(2p_1+u_0-\wt u_0)-\frac{2(2p_1+u_1-\wt u_1)}{e_{1,0}}, \\
\partial_1\ln(2p_1+u_1-\wt u_1)&=2u_0-2u_1+(2p_1+u_1-\wt u_1)-\frac{2(2p_1+u_0-\wt u_0)}{e_{0,1}}.
\end{align}
\ese
The corresponding Lax matrices are given by
\begin{align}\label{A22:Lax}
\bP=
\begin{pmatrix}
u_1-u_0 & 1 & 0 \\
0 & u_0-u_1 & 1 \\
k^3& 0 & 0
\end{pmatrix}, \quad
\bQ_1=
\begin{pmatrix}
a_{1,1} & -1 & 0 \\
0 & a_{2,2} & -1 \\
-k^3 & 0 & a_{3,3}
\end{pmatrix} \quad \hbox{and} \quad
\bQ_2=
\begin{pmatrix}
b_{1,1} & 1 & 0 \\
0 & b_{2,2} & 1 \\
k^3 & 0 & b_{3,3}
\end{pmatrix},
\end{align}
where the entries $a_{i,i}$ and $b_{i,i}$ are determined by
\begin{align*}
a_{1,1}=\frac{2p_1+u_1-\wt u_1}{e_{0,1}}, \quad a_{2,2}=\frac{2p_1+u_0-\wt u_0}{e_{1,0}}, \quad a_{3,3}=\frac{2p_1+u_0-\wt u_0}{2},
\end{align*}
and
\begin{align*}
b_{1,1}=\frac{2p_1+u_1-\wt u_1}{e_{1,0}}, \quad b_{2,2}=\frac{2p_1+u_0-\wt u_0}{e_{0,1}}, \quad b_{3,3}=\frac{2p_1+u_0-\wt u_0}{2},
\end{align*}
respectively.
The only difference between \eqref{A22:u} and \eqref{A11:u} is the multiplier $2$ on the left hand side of the first equation in \eqref{A22:u}.
However, we have to point out that we cannot transfer one to the other through a simple scaling, as the algebras for the two systems are entirely different,
as we can see from their respective Lax representations.
The potentials $u_0$ and $u_1$ in this class obey two additional constraints
\begin{align}\label{A22:Constraint}
\partial_1(2u_0+u_1)=-(u_0-u_1)^2 \quad \hbox{and} \quad \frac{(2p_1+u_1-\wt u_1)(2p_1+u_0-\wt u_0)^2}{2e_{0,1}e_{1,0}}=p_1^3.
\end{align}
\end{example}

\begin{example}
The $C_2^{(1)}$-type equation is a three-component system composed of the following nonlinear equations:
\bse\label{C21:u}
\begin{align}
\partial_1\ln(2p_1+u_0-\wt u_0)&=2u_1-2u_0+(2p_1+u_0-\wt u_0)-\frac{2(2p_1+u_1-\wt u_1)}{e_{1,0}}, \\
\partial_1\ln(2p_1+u_1-\wt u_1)&=u_2-2u_1+u_0+(2p_1+u_1-\wt u_1)-\frac{2p_1+u_2-\wt u_2}{e_{2,1}}-\frac{2p_1+u_0-\wt u_0}{e_{0,1}}, \\
\partial_1\ln(2p_1+u_2-\wt u_2)&=2u_1-2u_2+(2p_1+u_2-\wt u_2)-\frac{2(2p_1+u_1-\wt u_1)}{e_{1,2}}.
\end{align}
\ese
The Lax matrices in this case are given by
\begin{align}\label{C21:Lax}
\bP=
\begin{pmatrix}
u_1-u_0 & 1 & 0 & 0 \\
0 & u_2-u_1 & 1 & 0 \\
0 & 0 & u_1-u_2 & 1 \\
k^4& 0 & 0 &u_0-u_1
\end{pmatrix}, \quad
\bQ_1=
\begin{pmatrix}
a_{1,1} & -1 & 0 & 0\\
0 & a_{2,2} & -1 &0 \\
0 & 0 & a_{3,3} &-1\\
-k^4 & 0 & 0 & a_{4,4}
\end{pmatrix} \quad \hbox{and} \quad
\bQ_2=
\begin{pmatrix}
b_{1,1} & 1 & 0 & 0\\
0 & b_{2,2} & 1 &0 \\
0 & 0 & b_{3,3} & 1\\
k^4 & 0 & 0 & b_{4,4}
\end{pmatrix},
\end{align}
with the entries
\begin{align*}
a_{1,1}=\frac{2p_1+u_1-\wt u_1}{e_{0,1}}, \quad a_{2,2}=\frac{2p_1+u_2-\wt u_2}{e_{1,2}}, \quad
a_{3,3}=\frac{2p_1+u_1-\wt u_1}{e_{2,1}}, \quad a_{4,4}=\frac{2p_1+u_0-\wt u_0}{e_{1,0}},
\end{align*}
and
\begin{align*}
b_{1,1}=\frac{2p_1+u_1-\wt u_1}{e_{1,0}}, \quad b_{2,2}=\frac{2p_1+u_2-\wt u_2}{e_{2,1}}, \quad
b_{3,3}=\frac{2p_1+u_1-\wt u_1}{e_{1,2}}, \quad b_{4,4}=\frac{2p_1+u_0-\wt u_0}{e_{0,1}}.
\end{align*}
In this class, the variables $u_0$, $u_1$ and $u_2$ satisfy additional constraints
\begin{align}\label{C21:Constraint}
\partial_1(u_0+2u_1+u_2)=-(u_0-u_1)^2-(u_1-u_2)^2 \quad \hbox{and} \quad \frac{(2p_1+u_2-\wt u_2)(2p_1+u_1-\wt u_1)^2(2p_1+u_0-\wt u_0)}{e_{0,1}e_{1,2}e_{2,1}e_{1,0}}=p_1^4.
\end{align}
\end{example}

\begin{example}
The $D_3^{(2)}$-type equation is also a three-component coupled system which is composed of
\bse\label{D32:u}
\begin{align}
2\partial_1\ln(2p_1+u_0-\wt u_0)&=2u_1-2u_0+(2p_1+u_0-\wt u_0)-\frac{2(2p_1+u_1-\wt u_1)}{e_{1,0}}, \\
\partial_1\ln(2p_1+u_1-\wt u_1)&=u_2-2u_1+u_0+(2p_1+u_1-\wt u_1)-\frac{2p_1+u_2-\wt u_2}{e_{2,1}}-\frac{2p_1+u_0-\wt u_0}{e_{0,1}}, \\
2\partial_1\ln(2p_1+u_2-\wt u_2)&=2u_1-2u_2+(2p_1+u_2-\wt u_2)-\frac{2(2p_1+u_1-\wt u_1)}{e_{1,2}}.
\end{align}
\ese
The Lax matrices in this class are given by
\bse\label{D32:Lax}
\begin{align}
\bP=
\begin{pmatrix}
u_1-u_0 & 1 & 0 & 0 & 0 & 0 \\
0 & u_2-u_1 & 1 & 0 & 0 & 0 \\
0 & 0 & 0 & 1 & 0 & 0\\
0 & 0 & 0 & u_1-u_2 & 1 & 0\\
0 & 0 & 0 & 0 & u_0-u_1 & 1\\
k^6 & 0 & 0 & 0 & 0 & 0
\end{pmatrix},
\end{align}
\begin{align}
\bQ_1=
\begin{pmatrix}
a_{1,1} & -1 & 0 & 0 & 0 & 0\\
0 & a_{2,2} & -1 &0 & 0 & 0\\
0 & 0 & a_{3,3}  &-1 & 0 & 0\\
0 & 0 & 0 & a_{4,4} & -1 & 0\\
0 & 0 & 0 & 0 & a_{5,5} & -1\\
-k^6 & 0 & 0 & 0 & 0 & a_{6,6}
\end{pmatrix}\quad \hbox{and} \quad
\bQ_2=
\begin{pmatrix}
b_{1,1} & 1 & 0 & 0 & 0 & 0\\
0 & b_{2,2} & 1 &0 & 0 & 0\\
0 & 0 &  b_{3,3} &1 & 0 & 0\\
0 & 0 & 0 &  b_{4,4} & 1 & 0\\
0 & 0 & 0 & 0 & b_{5,5} & 1\\
k^6 & 0 & 0 & 0 & 0 &  b_{6,6}
\end{pmatrix},
\end{align}
\ese
where the entries $a_{i,i}$ and $b_{i,i}$ are determined by
\begin{align*}
&a_{1,1}=\frac{2p_1+u_1-\wt u_1}{e_{0,1}}, \quad a_{2,2}=\frac{2p_1+u_2-\wt u_2}{e_{1,2}}, \quad a_{3,3}=\frac{2p_1+u_2-\wt u_2}{2}, \\
&a_{4,4}=\frac{2p_1+u_1-\wt u_1}{e_{2,1}}, \quad a_{5,5}=\frac{2p_1+u_0-\wt u_0}{e_{1,0}}, \quad a_{6,6}=\frac{2p_1+u_0-\wt u_0}{2},
\end{align*}
and
\begin{align*}
&b_{1,1}=\frac{2p_1+u_1-\wt u_1}{e_{1,0}}, \quad b_{2,2}=\frac{2p_1+u_2-\wt u_2}{e_{2,1}}, \quad b_{3,3}=\frac{2p_1+u_2-\wt u_2}{2},\\
&b_{4,4}=\frac{2p_1+u_1-\wt u_1}{e_{1,2}}, \quad b_{5,5}=\frac{2p_1+u_0-\wt u_0}{e_{0,1}}, \quad b_{6,6}=\frac{2p_1+u_0-\wt u_0}{2},
\end{align*}
respectively.
Equation \eqref{D32:u} looks very similar to \eqref{C21:u}, namely the only difference occurs in the coefficients of the left hand sides of the first and third equations.
We comment that their respective algebraic structures are entirely different, from the aspects of solution and Lax representation.
Likewise, the potentials $u_0$, $u_1$ and $u_2$ in this class obey additional constraints
\begin{align}\label{D32:Constraint}
2\partial_1(u_0+u_1+u_2)=-(u_0-u_1)^2-(u_1-u_2)^2 \quad \hbox{and} \quad \frac{(2p_1+u_2-\wt u_2)^2(2p_1+u_1-\wt u_1)^2(2p_1+u_0-\wt u_0)^2}{4e_{0,1}e_{1,2}e_{2,1}e_{1,0}}=p_1^6.
\end{align}
\end{example}

The semi-discrete equations listed in this subsection are still nonlocal in terms of the flow variable $x_1$.
We can certainly localise these equations by introducing new variables such as $s_n\doteq\partial_1^{-1}u_n$.
However, in order to establish the connection between these equations and the continuous Drinfel'd--Sokolov hierarchies, we still choose $u_n$ as the potentials of these equations. Although the equations of $u_n$ are nonlocal, we shall see in the forthcoming subsection that suitable continuum limits of these equations
results in the potential forms of both positive and negative flows of the Drinfel'd--Sokolov hierarchies, which are, as expected, local equations.

We observe that all these semi-discrete equations are accompanied by additional constraints.
From the first glance, these systems look overdetermined, as in each class the number of equations is greater than that of dependent variables.
However, we would like to point out that these are actually not overdetermined systems.
In fact, both the semi-discrete equations and additional constraints are derived from the \ac{DL},
which implies that in each class the semi-discrete equations are compatible with the constraints, from the perspective of the corresponding solution space.
To put it another way, the constraints do not further restrict the solution spaces of corresponding semi-discrete equations.

From this aspect, it is reasonable to ignore the constraints
and only think of \eqref{A11:u}, \eqref{A22:u}, \eqref{C21:u} and \eqref{D32:u} as the semi-discrete Drinfel'd--Sokolov equations.
These equations, after all, arise as the compatibility conditions of their corresponding Lax pairs.
The additional constraints present restrictions on the variables $u_n$ in each class,
but unfortunately, it seems not possible to reduce the number of variables in the coupled semi-discrete systems here due to the feature of nonlocality\footnote{
This is not surprising because such an issue also occurred in the literature.
In \cite{FX17a}, the discrete \ac{GD}-type equations were also presented as multi-component systems
accompanied by additional constraints (which were referred to as first integrals).
As was pointed out by those authors, it is not always possible to reduce the number of potential by making use of the additional constraints.}.
We will see in the forthcoming subsection that the number of components can be reduced in the continuous case by using the continuous analogues of these constraints,
leading to well-known \ac{PDE}s in the Drinfel'd--Sokolov classification.

The semi-discrete equations \eqref{A11:u}, \eqref{A22:u}, \eqref{C21:u} and \eqref{D32:u} play a role of the \ac{BT}s for their corresponding continuous equations.
To be more precise, these equations are the \ac{BT}s of the multi-component systems \eqref{A11:KdV}, \eqref{A22:KdV}, \eqref{C21:KdV} and \eqref{D32:KdV}, respectively.
While the additional constraints in each class act as the non-auto \ac{BT} (like Miura transformation) between the potentials $u_n$.
For example, the \ac{SK} equation \eqref{SK} and the \ac{KK} equation \eqref{KK} are two separate scalar models in the $A_2^{(2)}$ class in the Drinfel'd--Sokolov classification.
The first equation in \eqref{A22:Constraint} is the non-auto \ac{BT} between \eqref{SK} and \eqref{KK}.
The \ac{SK} and \ac{KK} equations together can be reformulated as a two-component systems \eqref{A22:KdV}.
Then the semi-discrete equation \eqref{A22:u} plays a role of its auto \ac{BT} from $(u_0,u_1)$ to $(\wt u_0,\wt u_1)$.

\subsection{Continuum limits to the Kortweg--de Vries-type and \ac{2D} Toda-type equations}

To convince us that the obtained $(1+1)$-dimensional equations are suitable semi-discretisation of the Drinfel'd--Sokolov equations,
continuum limits towards the \ac{KdV}-type and \ac{2D} Toda-type equations are discussed in this subsection, illustrated by the above examples.

We first consider the continuum limits to the \ac{KdV}-type equations.
In the cases of $A_1^{(1)}$, $C_2^{(1)}$ and $D_3^{(2)}$,
we let $n_1\rightarrow\infty$ and $p_1\rightarrow\infty$ and introduce the change of variables $(x_1,n_1)\rightarrow(x_1,x_3)$ given by
\begin{align}\label{A11:Lim}
x_1+\frac{2n_1}{p_1}\doteq x_1 \quad \hbox{and} \quad \frac{2n_1}{3p_1^3}\doteq x_3.
\end{align}
While in the $A_2^{(2)}$ case, the reduction of period $3$ implies that
the corresponding \ac{PDE} describes evolutions with respect to the continuous independent variables $x_1$ and $x_5$.
Hence, for $A_2^{(2)}$-type we let $n_1\rightarrow\infty$ and $p_1\rightarrow\infty$ and simultaneously introduce the new coordinates $(x_1,x_5)$
which are connected with the old ones $(x_1,n_1)$ through
\begin{align}\label{A22:Lim}
x_1+\frac{2n_1}{p_1}\doteq x_1 \quad \hbox{and} \quad \frac{2n_1}{5p_1^5}\doteq x_5.
\end{align}
Then taking the respective continuum limits of \eqref{A11:u}, \eqref{A22:u}, \eqref{C21:u} and \eqref{D32:u}, we obtain continuous multi-component systems as follows.
\begin{description}
\item[$A_1^{(1)}$]
\bse\label{A11:KdV}
\begin{align}
-\partial_3(2u_0-2u_1)={}&\partial_1^3(u_0+2u_1)+\partial_1\left[6(u_1-u_0)\partial_1u_1+2(u_1-u_0)^3\right], \label{A11:KdVa} \\
-\partial_3(-2u_0+2u_1)={}&\partial_1^3(2u_0+u_1)+\partial_1\left[6(u_0-u_1)\partial_1u_0+2(u_0-u_1)^3\right], \label{A11:KdVb}
\end{align}
\ese
\item[$A_2^{(2)}$]
\bse\label{A22:KdV}
\begin{align}
-\partial_5(u_0-u_1)={}&\frac{1}{3}\partial_1^5(7u_0+3u_1)
+\frac{1}{3}\partial_1\left[15(\partial_1u_0)(\partial_1^2u_0)-15(\partial_1u_0)(\partial_1^2u_1)
-10(\partial_1u_1)(\partial_1^2u_0)+25(\partial_1u_1)(\partial_1^2u_1)\right. \nonumber \\
&\qquad\qquad\qquad\qquad\qquad+10(u_1-u_0)\partial_1^3u_1+15(u_1-u_0)(\partial_1u_1-\partial_1u_0)^2 \nonumber \\
&\qquad\qquad\qquad\qquad\qquad\left.+10(u_1-u_0)^2\partial_1^2(u_1-u_0)-5(u_1-u_0)^3\partial_1u_1-2(u_1-u_0)^5\right], \\
-\partial_5(-2u_0+2u_1)={}&\frac{1}{3}\partial_1^5(6u_0+4u_1)
+\frac{1}{3}\partial_1\left[50(\partial_1u_0)(\partial_1^2u_0)-20(\partial_1u_0)(\partial_1^2u_1)
-30(\partial_1u_1)(\partial_1^2u_0)+15(\partial_1u_1)(\partial_1^2u_1)\right. \nonumber \\
&\qquad\qquad\qquad\qquad\qquad-20(u_1-u_0)\partial_1^3u_0-30(u_1-u_0)(\partial_1u_1-\partial_1u_0)^2 \nonumber \\
&\qquad\qquad\qquad\qquad\qquad\left.-20(u_1-u_0)^2\partial_1^2(u_1-u_0)+10(u_1-u_0)^3\partial_1u_0+4(u_1-u_0)^5\right],
\end{align}
\ese
\item[$C_2^{(1)}$]
\bse\label{C21:KdV}
\begin{align}
-\partial_3(2u_0-2u_1)={}&\partial_1^3(u_0+2u_1)+\partial_1\left[6(u_1-u_0)\partial_1u_1+2(u_1-u_0)^3\right], \\
-\partial_3(-u_0+2u_1-u_2)={}&\partial_1^3(u_0+u_1+u_2)+\partial_1\left[3(u_2-u_1)\partial_1u_2-3(u_1-u_0)\partial_1u_0+(u_2-u_1)^3-(u_1-u_0)^3\right], \\
-\partial_3(-2u_1+2u_2)={}&\partial_1^3(2u_1+u_2)+\partial_1\left[6(u_1-u_2)\partial_1u_1+2(u_1-u_2)^3\right],
 \end{align}
\ese
\item[$D_3^{(2)}$]
\bse\label{D32:KdV}
\begin{align}
-\partial_3(u_0-u_1)={}&\partial_1^3(2u_0+u_1)+\partial_1\left[3(u_1-u_0)\partial_1u_1+(u_1-u_0)^3\right], \\
-\partial_3(-u_0+2u_1-u_2)={}&\partial_1^3(u_0+u_1+u_2)+\partial_1\left[3(u_2-u_1)\partial_1u_2-3(u_1-u_0)\partial_1u_0+(u_2-u_1)^3-(u_1-u_0)^3\right], \\
-\partial_3(-u_1+u_2)={}&\partial_1^3(u_1+2u_2)+\partial_1\left[3(u_1-u_2)\partial_1u_1+(u_1-u_2)^3\right].
\end{align}
\ese
\end{description}
These are the multi-component representations of the Drinfel'd--Sokolov equations from our viewpoint,
from which we can clearly observe the coefficients in terms of the corresponding Cartan matrices on the left hand sides.
In order to explicitly write down the \ac{KdV}-type equations in the Drinfel'd--Sokolov classification,
we will have to make use of the constraints listed in the above subsection to decouple the multi-component systems.
Below we list these equations example by example.

\begin{example}
The continuum limits of the additional constraints in \eqref{A11:Constraint} yield the following equations:
\bse\label{A11:KdVMT}
\begin{align}
\partial_1(u_0+u_1)={}&-(u_0-u_1)^2, \label{A11:MTa} \\
\partial_3(u_0+u_1)={}&-2(u_0-u_1)\partial_1^{-1}\partial_3(u_0-u_1)+2(u_0-u_1)^4+3(u_0-u_1)^2\partial_1(u_0+u_1) \nonumber \\
&-3(\partial_1u_0)^2-3(\partial_1u_1)^2+9(\partial_1u_0)(\partial_1u_1)-4(u_0-u_1)\partial_1^2(u_0-u_1)-2\partial_1^3(u_0+u_1). \label{A11:MTb}
\end{align}
\ese
This set of equations plays a role of the non-auto \ac{BT} between potentials $u_0$ and $u_1$,
which can be used to decouple the two-component system \eqref{A11:KdV}.
To be more precise, \eqref{A11:KdVMT} can help to eliminate $u_0$ (resp. $u_1$) in equation \eqref{A11:KdVa} (resp. \eqref{A11:KdVb}).
Consequently, we find that both $u_0$ and $u_1$ satisfy the same closed-form scalar equation as follows:
\begin{align}\label{KdV}
\partial_3u=\frac{1}{4}\partial_1^3u+\frac{3}{2}(\partial_1u)^2,
\end{align}
which is nothing but the (potential) \ac{KdV} equation.
\end{example}

\begin{example}
We can similarly take the continuum limit of \eqref{A22:Constraint} and decouple \eqref{A22:KdV}.
As a result, we obtain the two separate scalar \ac{PDE}s of $u_0$ and $u_1$, given by
\begin{align}\label{SK}
\partial_5 u_0=-\frac{1}{9}\partial_1^5 u_0-\frac{5}{3}\left(\partial_1 u_0\right)\left(\partial_1^3 u_0\right)-\frac{5}{3}\left(\partial_1 u_0\right)^3
\end{align}
and
\begin{align}\label{KK}
\partial_5 u_1=-\frac{1}{9}\partial_1^5 u_1-\frac{5}{3}\left(\partial_1 u_1\right)\left(\partial_1^3 u_1\right)
-\frac{5}{3}\left(\partial_1 u_1\right)^3-\frac{5}{4}\left(\partial_1^2 u_1\right)^2,
\end{align}
which are the respective potential forms of the famous \ac{SK} and \ac{KK} equations.
\end{example}

\begin{example}
The continuum limits of the constraints in \eqref{C21:Constraint} help to eliminate $u_2$ in \eqref{C21:KdV}
and reduces the three-component system to a two-component system for $(u_0,u_1)$ given by
\bse\label{HS}
\begin{align}
\partial_3u_0={}&-\frac{1}{2}\partial_1^3u_0-\frac{3}{4}\partial_1^3u_1-\frac{3}{4}(\partial_1u_0)^2+\frac{3}{2}(\partial_1u_0)(\partial_1u_1)-\frac{9}{4}(\partial_1u_1)^2 \nonumber \\
&+3(\partial_1^2u_1)(u_0-u_1)+\frac{3}{2}(\partial_1u_0-3\partial_1u_1)(u_0-u_1)^2-\frac{3}{4}(u_0-u_1)^4,\\
\partial_3u_1={}&\frac{1}{4}\partial_1^3u_1-\frac{3}{4}(\partial_1u_0)^2-\frac{3}{2}(\partial_1u_0)(\partial_1u_1)+\frac{3}{4}(\partial_1u_1)^2
-\frac{3}{2}(\partial_1u_0+\partial_1u_1)(u_0-u_1)^2-\frac{3}{4}(u_0-u_1)^4.
\end{align}
\ese
We can alternatively eliminate $u_0$ and write down a coupled system for $(u_2,u_1)$, which takes exactly the same form as \eqref{HS}.
Equation \eqref{HS} is a deformed form of the Hirota--Satsuma equation.
\end{example}

\begin{example}
The continuous analogue of \eqref{D32:Constraint} can further be used to reduce the number of components in \eqref{D32:KdV}.
This leads to a two-component system for $(u_0,u_1)$ given by
\bse\label{Ito}
\begin{align}
\partial_3u_0={}&-2\partial_1^3u_0-\partial_1^3u_1-2(\partial_1u_0)^2+(\partial_1u_0)(\partial_1u_1)-\frac{5}{2}(\partial_1u_1)^2 \nonumber \\
&+3(\partial_1^2u_1)(u_0-u_1)+(\partial_1u_0-4\partial_1u_1)(u_0-u_1)^2-\frac{1}{2}(u_0-u_1)^4, \\
\partial_3u_1={}&-2(\partial_1u_0)^2-2(\partial_1u_0)(\partial_1u_1)+\frac{1}{2}(\partial_1u_1)^2-(2\partial_1u_0+\partial_1u_1)(u_0-u_1)^2-\frac{1}{2}(u_0-u_1)^4,
\end{align}
\ese
and also another two-component system for $(u_2,u_1)$ which takes exactly the same form of \eqref{Ito}.
Equation \eqref{Ito} is a deformed form of Ito's coupled \ac{KdV} equation.
\end{example}

The limits to the \ac{2D} Toda-type equations are the same as \eqref{Lim2}.
We let $n_1\rightarrow\infty$ and $p_1\rightarrow 0$ and introduce the change of variables from $(x_1,n_1)$ to $(x_1,x_{-1})$ composed of
\begin{align}\label{2DTLLim}
x_1\doteq x_1 \quad \hbox{and} \quad 2n_1p_1\doteq x_{-1}.
\end{align}
Then by series expansion, the respective leading terms in the continuum limits of \eqref{A11:u}, \eqref{A22:u}, \eqref{C21:u} and \eqref{D32:u}
give rise to the multi-component systems as follows:
\begin{description}
\item[$A_1^{(1)}$]
\begin{align}\label{A11:2DTL}
&\partial_1
\left(
\begin{array}{c}
\ln(1-\partial_{-1}u_0) \\
\ln(1-\partial_{-1}u_1)
\end{array}
\right)
=-
\begin{pmatrix}
2 & -2  \\
-2 & 2
\end{pmatrix}
\left(
\begin{array}{c}
u_0 \\
u_1
\end{array}
\right),
\end{align}
\item[$A_2^{(2)}$]
\begin{align}\label{A22:2DTL}
\partial_1
\left(
\begin{array}{c}
\ln(1-\partial_{-1}u_0) \\
\ln(1-\partial_{-1}u_1)
\end{array}
\right)
=-
\begin{pmatrix}
1 & -1 \\
-2 & 2
\end{pmatrix}
\left(
\begin{array}{c}
u_0 \\
u_1
\end{array}
\right),
\end{align}
\item[$C_2^{(1)}$]
\begin{align}\label{C21:2DTL}
\partial_1
\left(
\begin{array}{c}
\ln(1-\partial_{-1}u_0) \\
\ln(1-\partial_{-1}u_1) \\
\ln(1-\partial_{-1}u_2)
\end{array}
\right)
=-
\begin{pmatrix}
2 & -2 & 0 \\
-1 & 2 & -1   \\
0 & -2 & 2
\end{pmatrix}
\left(
\begin{array}{c}
u_0 \\
u_1 \\
u_2
\end{array}
\right),
\end{align}
\item[$D_3^{(2)}$]
\begin{align}\label{D32:2DTL}
\partial_1
\left(
\begin{array}{c}
\ln(1-\partial_{-1}u_0) \\
\ln(1-\partial_{-1}u_1)\\
\ln(1-\partial_{-1}u_2)
\end{array}
\right)
=-
\begin{pmatrix}
1 & -1 & 0 \\
-1 & 2 & -1 \\
0 & -1 & 1
\end{pmatrix}
\left(
\begin{array}{c}
u_0 \\
u_1 \\
u_2
\end{array}
\right).
\end{align}
\end{description}
These equations are equivalent forms of the \ac{2D} Toda-type equations.
Equations \eqref{A11:2DTL}, \eqref{A22:2DTL}, \eqref{C21:2DTL} and \eqref{D32:2DTL}, as expected, can alternatively be obtained
by performing \eqref{Ar1:Reduc}, \eqref{A2r2:Reduc}, \eqref{Cr1:Reduc} and \eqref{Dr2:Reduc} on \eqref{A:uLim2}, respectively.

We can also reduce the number of variables in the \ac{2D} Toda-type equations, by following the same procedure of deriving the \ac{KdV}-type equations,
from which the respective negative flows of the Drinfel'd--Sokolov hierarchies are obtained.
\begin{example}
In the $A_1^{(1)}$ class, by performing the limit \eqref{2DTLLim} on the second equation in \eqref{A11:Constraint} we obtain
\begin{align}\label{A11:2DTLMT}
(1-\partial_{-1}u_0)(1-\partial_{-1}u_1)=1.
\end{align}
Equation \eqref{A11:2DTLMT} allows us eliminate either $u_0$ or $u_1$ in \eqref{A11:2DTL}.
At the end, we find that in this class both $u_0$ and $u_1$ satisfy
\begin{align}\label{NegKdV}
\left[1-\partial_{-1}u-\frac{1}{2}\partial_1\partial_{-1}\ln(1-\partial_{-1}u)\right](1-\partial_{-1}u)=1,
\end{align}
i.e. the negative flow of the \ac{KdV} equation \eqref{KdV}.
\end{example}

\begin{example}
In the $A_2^{(2)}$ class, we can make use of the limit \eqref{A22:Constraint} to reduce the number of components in \eqref{A22:2DTL},
which leads to two separate equations given by
\begin{align}\label{NegSK}
\left[1-\partial_{-1}u_0-\partial_1\partial_{-1}\ln(1-\partial_{-1}u_0)\right](1-\partial_{-1}u_0)^2=1
\end{align}
and
\begin{align}\label{NegKK}
\left[1-\partial_{-1}u_1+\frac{1}{2}\partial_1\partial_{-1}\ln(1-\partial_{-1}u_1)\right]^2(1-\partial_{-1}u_1)=1,
\end{align}
respectively, which form the respective negative flows of the \ac{SK} equation \eqref{SK} and the \ac{KK} equation \eqref{KK}.
\end{example}

\begin{example}
In the $C_2^{(1)}$ class, the number of components in \eqref{C21:2DTL} can be reduced with the help of the continuum limit of \eqref{C21:Constraint}.
We then derive, for example, a two-component system for $(u_0,u_1)$ by eliminating $u_2$ as follows:
\bse\label{NegHS}
\begin{align}
&\partial_1\ln(1-\partial_{-1}u_0)=-2u_0+2u_1, \\
&\left[1+\partial_{-1}u_0-2\partial_{-1}u_1-\partial_1\partial_{-1}\ln(1-\partial_{-1}u_1)\right](1-\partial_{-1}u_0)(1-\partial_{-1}u_1)^2=1,
\end{align}
\ese
namely the negative flow of \eqref{HS}.
Equation \eqref{NegHS} can be decoupled by eliminating $u_1$, leading to a higher-order scalar equation for the potential $u_0$.
Alternatively, we can derive a coupled system for $(u_2,u_1)$, which is exactly the same as \eqref{NegHS}.
\end{example}

\begin{example}
In the $D_3^{(2)}$ class, we can also make use of the continuous analogue of \eqref{D32:Constraint} to reduce the number of components in \eqref{D32:2DTL}.
Consequently, we derive for $(u_0,u_1)$ a two-component system
\bse\label{NegIto}
\begin{align}
&\partial_1\ln(1-\partial_{-1}u_0)=-u_0+u_1, \\
&\left[1+\partial_{-1}u_0-2\partial_{-1}u_1+\partial_1\partial_{-1}\ln(1-\partial_{-1}u_1)\right](1-\partial_{-1}u_0)(1-\partial_{-1}u_1)=1,
\end{align}
\ese
i.e. the negative flow of \eqref{Ito}.
Since $u_0$ and $u_2$ are on the same footing in \eqref{D32:2DTL}, equation \eqref{NegIto} also holds for $(u_2,u_1)$.
Equation \eqref{NegIto} can be decoupled by eliminating $u_1$, which results in a higher-order scalar \ac{PDE} for $u_0$.
\end{example}

We have successfully discussed the two different continuum limits of the semi-discrete Drinfel'd--Sokolov equations.
Similarly, by performing the two limit schemes on the linear problems of the form \eqref{Laxb},
we are able to recover the linear problem \eqref{Laxa} and its higher-order counterparts in the series expansion for the \ac{KdV}-type equations
and the linear problem in terms of the flow variable $x_1$ for the \ac{2D} Toda-type equations.
Since these are known results, cf. e.g. \cite{FG1,FG83}, we omit the relevant formulae here.

\section{Concluding remarks}\label{S:Concl}

We constructed a large class of novel integrable semi-discrete equations within the \ac{DL} framework, in the language of infinite matrix.
In our scheme, the fundamental model is a new semi-discrete equation with two discrete and one continuous independent variables, i.e. \eqref{A:u},
which is integrable in the sense of possessing Lax representation.
This equation plays a role of discretisation of both the third-order differential-difference \ac{KP} equation and the \ac{2D} Toda equation.
Compared with the known results in the literature \cite{NCW85,DJM2},
an interesting observation is that the nonlocality occurs in the semi-discrete equation and its Lax pair.
In addition, equation \eqref{A:u} cannot be transferred into Hirota's bilinear form. Instead, its $\tau_n$-form, i.e. \eqref{A:tau}, is a homogeneous equation of degree $6$.

From our framework, the semi-discrete equation \eqref{A:u} is associated with the infinite-dimensional algebra $A_\infty$.
Thus, by performing various reductions on this master equation,
we successfully constructed the semi-discrete Drinfel'd--Sokolov hierarchies associated with the Kac--Moody algebras $A_r^{(1)}$, $A_{2r}^{(2)}$, $C_r^{(1)}$ and $D_{r+1}^{(2)}$.
Our result partly solved the unsolved problem proposed by Date, Jimbo and Miwa in \cite{DJM5} in their series work,
though we firmly believe that it could be solved in their framework.
We have also shown that all these semi-discrete Drinfel'd--Sokolov equations have a unified Lax structure \eqref{Lax},
in which the $n_1$-part provides a new discrete version of the theory of factorisation of operators proposed by Fordy and Gibbons \cite{FG80,FG83}.

The semi-discrete modified Drinfel'd--Sokolov hierarchies seem not to exist in our framework.
This is because in the higher-dimensional case, the nonlocality in terms of the discrete variable $n$ cannot be avoided in the semi-discrete modified equation \eqref{A:v},
resulting in the consequence that an divergence issue appears when we perform periodic reductions.
This also coincides the statement made by Adler and Postnikov \cite{AP11} that there may not exist a Miura transform between \ac{SK} and \ac{KK},
when they studied the semi-discrete \ac{SK} equation (which differs from ours in this paper) proposed in \cite{TH96};
while such an issue does not occur in the continuous case,
guaranteed by the existence of the \ac{FG} equation, i.e. an equation as the modification of both \ac{SK} and {KK}, see \cite{FG1}.

Our construction also naturally induces exact solutions for the obtained semi-discrete equations.
The procedure follows from the original idea from Fokas and Ablowitz \cite{FA81}.
In fact, by specifying integration measure and integration domain, we are able to construct special classes of explicit solutions.
For instance, the Cauchy matrix type solution (i.e. finite-pole solution) can be easily obtained by taking a special integration measure containing a finite number of poles.
Since the reductions that we perform in this paper coincide with those in the theory of the \ac{2D} Toda-type equations,
we refer the reader to \cite{Fu21b} for the general formulae of the Cauchy matrix solutions of all the discussed semi-discrete Drinfel'd--Sokolov equations
by substituting the plane wave factors $\rho_n(k)$ and $\sigma_n(k')$ with \eqref{A:PWF}.

Finally, we would like to point out that the \ac{DL} framework allows us to search for the fully discrete Drinfel'd--Sokolov hierarchies.
However, a number of nontrivial techniques are involved in the construction of closed-form integrable discrete equations,
which also leads to very different integrability characteristics.
For this reason, we shall present those results separately.

\section*{Acknowledgments}
This project was supported by the National Natural Science Foundation of China (grant no. 11901198) and Shanghai Pujiang Program (grant no. 19PJ1403200).
WF was also partially sponsored by the Science and Technology Commission of Shanghai Municipality (grant no. 18dz2271000).

\begin{appendix}

\section{Derivation of the infinite matrix relations and closed-form equations}

\subsection{Derivation of \eqref{A:UDyn}}\label{A:UDynPf}
We first derive the dynamical evolution of $\bU_n$ in terms of the continuous independent variable $x_1$.
The derivative of \eqref{U} with respect to $x_1$ provides us with
\begin{align*}
\partial_1\bU_n=(1-\bU_n\bOa)(\partial_1\bC_n)-(\partial_1\bU_n)\bOa\bC_n,
\end{align*}
which can alternatively be written as
\begin{align*}
(\partial_1\bU_n)(1+\bOa\bC_n)=\bLd\bC_n+\bU_n\tbLd-\bU_n\bOa\bLd\bC_n.
\end{align*}
If we substitute $\bOa\bLd$ with $\bO-\tbLd\bOa$ by following \eqref{A:OaDyn}, the above equation turns out to be
\begin{align*}
(\partial_1\bU_n)(1+\bOa\bC_n)=\bLd\bC_n+\bU_n\tbLd(1+\bOa\bC_n)-\bU_n\bO\bC_n.
\end{align*}
In other words, $\bU_n$ obeys the dynamical evolution \eqref{A:UDyna}.
We next derive the discrete evolution of $\bU_n$ with respect to $n_1$.
Performing the $\wt\cdot$ operation on \eqref{U} and simultaneously multiplying it by $\frac{p_1-\tbLd}{p_1+\tbLd}$, we obtain
\begin{align*}
\wt\bU_n\frac{p_1-\tbLd}{p_1+\tbLd}=(1-\wt\bU_n\bOa)\wt\bC_n\frac{p_1-\tbLd}{p_1+\tbLd}=\frac{p_1+\bLd}{p_1-\bLd}\bC_n-\wt\bU_n\bOa\frac{p_1+\bLd}{p_1-\bLd}\bC_n,
\end{align*}
where in the last step we have made use of \eqref{A:CDynb}. Notice that \eqref{A:OaDyn} implies that
\begin{align*}
\bOa\frac{p_1+\bLd}{p_1-\bLd}-\frac{p_1-\tbLd}{p_1+\tbLd}\bOa=2p_1\frac{1}{p_1+\tbLd}\bO\frac{1}{p_1-\bLd}.
\end{align*}
We derive that the infinite matrix $\bU_n$ satisfies
\begin{align*}
\wt\bU_n\frac{p_1-\tbLd}{p_1+\tbLd}=\frac{p_1+\bLd}{p_1-\bLd}\bC_n-\wt\bU_n\left[\frac{p_1-\tbLd}{p_1+\tbLd}\bOa+2p_1\frac{1}{p_1+\tbLd}\bO\frac{1}{p_1-\bLd}\right]\bC_n,
\end{align*}
which is equivalent to \eqref{A:UDynb}.
The derivation of \eqref{A:UDync} is very similar to that of \eqref{A:UDynb}, by shifting \eqref{U} by one unit with respect to $n$.

\subsection{Derivation of \eqref{A:tauDyn}}\label{A:tauDynPf}

First of all, we differentiate $\ln\tau_n$ with respect to $x_1$ and obtain
\begin{align*}
\partial_1\ln\tau_n=\partial_1\ln[\det(1+\bOa\bC_n)]=\partial_1\tr[\ln(1+\bOa\bC_n)]=\tr[\partial_1\ln(1+\bOa\bC_n)]=\tr[(1+\bOa\bC_n)^{-1}\bOa(\partial_1\bC_n)],
\end{align*}
in which the identity $\ln\det[\cdot]=\tr\ln[\cdot]$ is used for the second equality.
Notice that $C_n$ satisfies \eqref{A:CDyna}, this equations can be further rewritten as
\begin{align*}
\partial_1\ln\tau_n=\tr[(1+\bOa\bC_n)^{-1}\bOa(\bLd\bC_n+\bC_n\tbLd)]=\tr[(1+\bOa\bC_n)^{-1}\bOa\bLd\bC_n+(1+\bOa\bC_n)^{-1}\bOa\bC_n\tbLd].
\end{align*}
By replacing $\bOa\bLd$ with the help of \eqref{A:OaDyn}, we end up with
\begin{align*}
\partial_1\ln\tau_n=\tr[\bC_n(1+\bOa\bC_n)^{-1}\bOa\bLd+\bC_n(1+\bOa\bC_n)^{-1}\tbLd\bOa]=\tr(\bU_n\bO)=\tr(\bO\bU_n)=\bU_n^{(0,0)};
\end{align*}
in other words, equation \eqref{A:tauDyna} is proven.
Secondly, performing the tilde shift operation on the tau function and simultaneously taking \eqref{A:CDynb} and \eqref{A:OaDyn} into consideration, we have
\begin{align*}
\wt\tau_n=\det(1+\bOa\wt\bC_n)=\det\left(1+\bOa\frac{p_1+\bLd}{p_1-\bLd}\bC_n\frac{p_1+\tbLd}{p_1-\tbLd}\right)
=\det\left(1+\bOa\bC_n+2p_1\frac{1}{p_1-\tbLd}\bO\frac{1}{p_1-\bLd}\bC_n\right),
\end{align*}
which is equivalent to
\begin{align*}
\wt\tau_n=\det(1+\bOa\bC_n)\det\left(1+2p_1(1+\bOa\bC_n)^{-1}\frac{1}{p_1-\tbLd}\bO\frac{1}{p_1-\bLd}\bC_n\right)
=\tau_n\left[1+2p_1\left(\frac{1}{p_1-\bLd}\bU_n\frac{1}{p_1-\tbLd}\right)^{(0,0)}\right],
\end{align*}
because of the Weinstein--Aronszajn formula as well as the relation \eqref{U}.
Hence, we derive the first equation in \eqref{A:tauDynb}.
The second equality in \eqref{A:tauDynb} is proven by performing $\ut\cdot$ operation on the tau function through the same procedure.
Following the idea of deriving \eqref{A:tauDynb}, we can similarly preform the shift operation with regard to $n$ on the tau function.
This gives rise to the two equalities in \eqref{A:tauDync}.

\subsection{Derivation of \eqref{A:tau}}\label{A:tauPf}

The first few relations that we need are the ones derived from \eqref{A:UDyna}.
To find equations involving $V_n(a)$, $W_n(b)$ and $S_n(a,b)$, we consider the operations $[\eqref{A:UDyna}\frac{1}{a+\tbLd}]^{(0,0)}$, $[\frac{1}{a+\bLd}\eqref{A:UDyna}]^{(0,0)}$ and $[\frac{1}{a+\bLd}\eqref{A:UDyna}\frac{1}{b+\tbLd}]^{(0,0)}$, one by one.
These yield the respective dynamical evolutions of $V_n(a)$, $W_n(b)$ and $S_n(a,b)$ with regard to $x_1$ as follows:
\bse\label{x1}
\begin{align}
&\partial_1V_n(a)=a-aV_n(a)-\left(\bLd\bU_n\frac{1}{a+\tbLd}\right)^{(0,0)}-u_nV_n(a), \label{x1:V}\\
&\partial_1W_n(a)=a-aW_n(a)-\left(\frac{1}{a+\bLd}\bU_n\tbLd\right)^{(0,0)}-u_nW_n(a), \label{x1:W}\\
&\partial_1S_n(a,b)=1-W_n(a)V_n(b)-(a+b)S_n(a,b). \label{x1:S}
\end{align}
\ese
Next, we derive the necessary equations involving $u_n$, $V_n(a)$ and $W_n(a)$ which describe the evolution with respect to the discrete variable $n_1$.
By evaluating $\eqref{A:UDynb}^{(0,0)}$, $[\eqref{A:UDynb}\frac{1}{a+\tbLd}]^{(0,0)}$ as well as $[\frac{1}{a+\bLd}\eqref{A:UDynb}]^{(0,0)}$, the following equations arise:
\bse\label{n1}
\begin{align}
&2p_1+u_n-\wt u_n=2p_1\wt V_n(p_1)W_n(-p_1), \label{n1:u} \\
&\frac{V_n(a)}{\wt V_n(p_1)}+\frac{p_1+a}{p_1-a}\frac{\wt V_n(a)}{\wt V_n(p_1)}=\frac{2p_1}{p_1-a}+2p_1S_n(-p_1,a), \label{n1:V} \\
&\frac{\wt W_n(a)}{W_n(-p_1)}+\frac{p_1-a}{p_1+a}\frac{W_n(a)}{W_n(-p_1)}=\frac{2p_1}{p_1+a}-2p_1\wt S_n(a,p_1). \label{n1:W}
\end{align}
\ese
Finally, we also need the relations involving the evolutions in terms of the discrete direction $n$.
This can be realised by taking $[\eqref{A:UDync}\frac{1}{a+\tbLd}]^{(0,0)}$, $[\frac{1}{a+\bLd}\eqref{A:UDync}]^{(0,0)}$ and also $[\frac{1}{a+\bLd}\eqref{A:UDync}\frac{1}{b+\tbLd}]^{(0,0)}$, respectively.
As a result, we achieve equations as follows:
\bse\label{n}
\begin{align}
&a-aV_{n+1}(a)=\left(\bLd\bU_n\frac{1}{a+\tbLd}\right)^{(0,0)}+u_{n+1}V_n(a), \label{n:V}\\
&a-aW_n(a)=\left(\frac{1}{a+\bLd}\bU_{n+1}\tbLd\right)^{(0,0)}+u_nW_{n+1}(a), \label{n:W}\\
&W_{n+1}(a)V_n(b)=1-aS_n(a,b)-bS_{n+1}(a,b). \label{n:S}
\end{align}
\ese
The above nine equations, i.e. \eqref{x1}, \eqref{n1} and \eqref{n} form fundamental ingredients to construct a scalar equation for the tau function.

Now we establish their links with the tau function.
If we take $a=b=-p_1$ in \eqref{x1:S} and \eqref{n:S}, respectively, these two equations are reformulated as
\begin{align*}
V_n(-p_1)W_n(-p_1)=1+2p_1S_n(-p_1,-p_1)-\partial_1S_n(-p_1,-p_1)=1+2p_1S_n(-p_1,-p_1)-\frac{1}{2p_1}\partial_1\left[1+2p_1S_n(-p_1,-p_1)\right]
\end{align*}
and
\begin{align*}
V_n(-p_1)W_{n+1}(-p_1)=1+p_1S_n(-p_1,-p_1)+p_1S_{n+1}(-p_1,-p_1)=\frac{1}{2}\left[1+2p_1S_n(-p_1,-p_1)+1+2p_1S_{n+1}(-p_1,-p_1)\right],
\end{align*}
respectively.
Recall that \eqref{A:tauDynb} provides a direct transformation between $\tau_n$ and $S_n(a,b)$.
From the above two equations, we obtain
\begin{align}\label{A:tauVWa}
V_n(-p_1)W_n(-p_1)=\frac{1}{2p_1}\frac{\wt{\tau}_n}{\tau_n}\left(2p_1+\partial_1\ln\frac{\tau_n}{\wt\tau_n}\right) \quad \hbox{and} \quad V_n(-p_1)W_{n+1}(-p_1)=\frac{1}{2}\left(\frac{\wt\tau_n}{\tau_n}+\frac{\wt\tau_{n+1}}{\tau_{n+1}}\right),
\end{align}
which further induce the transformations
\begin{align}\label{A:tauVWb}
\frac{V_{n+1}(-p_1)}{V_n(-p_1)}=\frac{\displaystyle 2p_1+\partial_1\ln\frac{\tau_{n+1}}{\wt\tau_{n+1}}}{\displaystyle p_1\left(1+\frac{\wt{\tau}_n\tau_{n+1}}{\tau_n\wt{\tau}_{n+1}}\right)} \quad \hbox{and} \quad
\frac{W_{n-1}(-p_1)}{W_n(-p_1)}=\frac{\displaystyle 2p_1+\partial_1\ln\frac{\tau_{n-1}}{\wt\tau_{n-1}}}{\displaystyle p_1\left(1+\frac{\wt{\tau}_n\tau_{n-1}}{\tau_n\wt{\tau}_{n-1}}\right)}.
\end{align}
Adding \eqref{x1:V} and \eqref{n:V} up, we are able to eliminate $\left(\bLd\bU_n\frac{1}{a+\tbLd}\right)^{(0,0)}$ and derive
\begin{align*}
\partial_1\ln V_n(a)=u_{n+1}-u_n-a\left(1-\frac{V_{n+1}(a)}{V_n(a)}\right), \quad \hbox{and similarly we have} \quad \partial_1\ln W_n(a)=u_{n-1}-u_n-a\left(1-\frac{W_{n-1}(a)}{W_n(a)}\right),
\end{align*}
which follows from \eqref{x1:W} and \eqref{n:W}.
Setting $a=-p_1$ in these two equations, we immediately obtain
\begin{align}\label{x1-n:VW}
\partial_1\ln V_n(-p_1)=u_{n+1}-u_n+p_1\left(1-\frac{V_{n+1}(-p_1)}{V_n(-p_1)}\right) \quad \hbox{and} \quad \partial_1\ln W_n(-p_1)=u_{n-1}-u_n+p_1\left(1-\frac{W_{n-1}(-p_1)}{W_n(-p_1)}\right).
\end{align}
We compute the addition of the two equations in \eqref{x1-n:VW}, which leads to an identity
\begin{align*}
\partial_1\ln[V_n(-p_1)W_n(-p_1)]=p_1\left(2-\frac{V_{n+1}(-p_1)}{V_n(-p_1)}-\frac{W_{n-1}(-p_1)}{W_n(-p_1)}\right)+u_{n+1}-2u_n+u_{n-1}.
\end{align*}
Notice that the transformations \eqref{A:tauDyna}, \eqref{A:tauVWa} and \eqref{A:tauVWb} allow us to substitute $u_n$, $V_n(-p_1)W_n(-p_1)$, $\frac{V_{n+1}(-p_1)}{V_n(-p_1)}$ and $\frac{W_{n-1}(-p_1)}{W_n(-p_1)}$ with the tau function.
We finally reach to the scalar closed-form equation of \eqref{A:tau}.

Equation \eqref{A:tau} is also obtainable by setting $a=b=p_1$ in the above derivation.
In fact, when $a=b=p_1$ we have the transformations
\begin{align}\label{A:tauVWa'}
W_n(p_1)V_n(p_1)=\frac{1}{2p_1}\frac{\ut{\tau}_n}{\tau_n}\left(2p_1+\partial_1\ln\frac{\ut{\tau}_n}{\tau_n}\right) \quad \hbox{and} \quad W_{n+1}(p_1)V_n(p_1)=\frac{1}{2}\left(\frac{\ut\tau_n}{\tau_n}+\frac{\ut\tau_{n+1}}{\tau_{n+1}}\right),
\end{align}
as well as
\begin{align}\label{A:tauVWb'}
\frac{V_{n+1}(p_1)}{V_n(p_1)}=\frac{\displaystyle 2p_1+\partial_1\ln\frac{\ut\tau_{n+1}}{\tau_{n+1}}}{\displaystyle p_1\left(1+\frac{\ut\tau_n\tau_{n+1}}{\tau_n\ut\tau_{n+1}}\right)} \quad \hbox{and} \quad
\frac{W_{n-1}(p_1)}{W_n(p_1)}=\frac{\displaystyle 2p_1+\partial_1\ln\frac{\ut\tau_{n-1}}{\tau_{n-1}}}{\displaystyle p_1\left(1+\frac{\ut\tau_n\tau_{n-1}}{\tau_n\ut\tau_{n-1}}\right)}.
\end{align}
Simultaneously, we have also
\begin{align}\label{x1-n:VW'}
\partial_1\ln V_n(p_1)=u_{n+1}-u_n-p_1\left(1-\frac{V_{n+1}(p_1)}{V_n(p_1)}\right) \quad \hbox{and} \quad \partial_1\ln W_n(p_1)=u_{n-1}-u_n-p_1\left(1-\frac{W_{n-1}(p_1)}{W_n(p_1)}\right).
\end{align}
Then the identity
\begin{align*}
\partial_1\ln[V_n(p_1)W_n(p_1)]=u_{n+1}-2u_n+u_{n-1}+p_1\left(\frac{V_{n+1}(p_1)}{V_n(p_1)}+\frac{W_{n-1}(p_1)}{W_n(p_1)}-2\right)
\end{align*}
yields equation \eqref{A:tau}.

\subsection{Derivation of \eqref{A:uDyn}}\label{A:uDynPf}
Recall that $\bU_n$ satisfies \eqref{A:UDyna}.
By differentiating \eqref{u} with respect to $x_1$, we have
\begin{align*}
\partial_1\bu_n(k)={}&-(\partial_1\bU_n)\bOa\bc(k)\rho_n(k)+(1-\bU_n\bOa)\bc(k)[\partial_1\rho_n(k)] \\
={}&-(\bLd\bU_n+\bU_n\tbLd-\bU_n\bO\bU_n)\bOa\bc(k)\rho_n(k)+(1-\bU_n\bOa)\bLd\bc(k)\rho_n(k) \\
={}&\bLd(1-\bU_n\bOa)\bc(k)\rho_n(k)-\bU_n(\bOa\bLd+\tbLd\bOa)\bc(k)\rho_n(k)+\bU_n\bO\bU_n\bOa\bc(k)\rho_n(k).
\end{align*}
Notice that $\bOa$ satisfies \eqref{A:OaDyn}, we thus obtain
\begin{align*}
\partial_1\bu_n(k)=\bLd(1-\bU_n\bOa)\bc(k)\rho_n(k)-\bU_n\bO(1-\bU_n\bOa)\bc(k)\rho_n(k).
\end{align*}
Making use of \eqref{u} again, we end up with \eqref{A:uDyna}.
Next, we derive the dynamical evolution of $\bu_n(k)$ in terms of $n_1$.
Performing the tilde shift on \eqref{A:u} provides us with
\begin{align*}
\wt\bu_n(k)={}&(1-\wt\bU_n\bOa)\bc(k)\wt\rho_n(k)=(1-\wt\bU_n\bOa)\frac{p_1+\bLd}{p_1-\bLd}\bc(k)\rho_n(k)
=\frac{p_1+\bLd}{p_1-\bLd}\bc(k)\rho_n(k)-\wt\bU_n\bOa\frac{p_1+\bLd}{p_1-\bLd}\bc(k)\rho_n(k) \\
={}&\frac{p_1+\bLd}{p_1-\bLd}\bc(k)\rho_n(k)-\wt\bU_n\left[\frac{p_1-\tbLd}{p_1+\tbLd}\bOa+2p_1\frac{1}{p_1+\tbLd}\bO\frac{1}{p_1-\bLd}\right]\bc(k)\rho_n(k).
\end{align*}
Notice that $\bU_n$ satisfies \eqref{A:UDynb}. We are able to reformulate this equation as
\begin{align*}
\wt\bu_n(k)=\frac{p_1+\bLd}{p_1-\bLd}(1-\bU_n\bOa)\bc(k)\rho_n(k)-2p_1\wt\bU_n\frac{1}{p_1+\tbLd}\bO\frac{1}{p_1-\bLd}(1-\bU_n\bOa)\bc(k)\rho_n(k),
\end{align*}
which is nothing but \eqref{A:uDynb}.
Following a similar procedure of deriving \eqref{A:uDynb}, we can derive \eqref{A:uDync} in virtue of \eqref{A:UDync}.

\subsection{Derivation of \eqref{A:Lax}}\label{A:LaxPf}
We subtract \eqref{A:uDync} from \eqref{A:uDyna} and obtain
\begin{align*}
\partial_1\bu_n(k)=\bu_{n+1}(k)+(\bU_{n+1}-\bU_n)\bO\bu_n(k),
\end{align*}
whose $0$th-component gives us the linear equation \eqref{A:Laxa}.
To derive equation \eqref{A:Laxb}, we evaluate $\eqref{A:uDynb}^{(0)}$ and $[\frac{1}{p_1-\bLd}\eqref{A:uDync}]^{(0)}$ and obtain
\begin{align*}
\wt\phi_n+\phi_n=2p_1\wt V_{n}(p_1)\left(\frac{1}{p_1-\bLd}\bu_n(k)\right)^{(0)} \quad \hbox{and} \quad
p_1\left(\frac{1}{p_1-\bLd}\bu_n(k)\right)^{(0)}-\left(\frac{1}{p_1-\bLd}\bu_{n+1}(k)\right)^{(0)}=W_{n+1}(-p_1)\phi_n.
\end{align*}
These two equations result in a linear equation in terms of $\phi_n$ taking the form of
\begin{align*}
\frac{\wt\phi_n+\phi_n}{2\wt V_n(p)W_{n+1}(-p_1)}-\frac{\wt\phi_{n+1}+\phi_{n+1}}{2p_1\wt V_{n+1}(p_1)W_{n+1}(-p_1)}=\phi_n.
\end{align*}
Notice that setting $a=-p_1$ in \eqref{n1:V} and $b=p_1$ \eqref{n1:W}, respectively yield
\begin{align*}
\frac{V_n(-p_1)}{\wt{V}_n(p_1)}=\frac{\wt{\tau}_n}{\tau_n} \quad \hbox{and} \quad \frac{W_n(-p_1)}{\wt{W}_n(p_1)}=\frac{\wt{\tau}_n}{\tau_n}.
\end{align*}
We finally reach to \eqref{A:Laxb}, in virtue of \eqref{A:tauVWa} and \eqref{n1:u}, where $\tau_n$ is replaced by $u_n$ through the transformation $u_n=\partial_1\ln\tau_n$.
\end{appendix}

\renewcommand{\bibname}{References}
\bibliography{References}
\bibliographystyle{plain}

\end{document}